\newtheorem{theo}{Theorem} 
\newtheorem{cor}[theo]{Corollary}
\theoremstyle{definition}
\newtheorem{defn}[theo]{Definition}
\newcommand{\R}{\mathbb{R}}
\newcommand{\A}{\mathfrak{L}}
\newcommand\sS{{\mathcal S}}
\newcommand\given{\;|\;}
\newcommand\suchthat{\;:\;} 
\newcommand\tildem{{\tilde m}}
\newcommand\tildes{{\tilde s}}
\newcommand\norm[1]{\left\Vert#1\right\Vert}
\let\ds\displaystyle
\newcommand\proj{\mathop{\rm proj}}
\begin{document}

\title{A Markovian genomic concatenation model guided by persymmetric matrices}

\author{Andrew G. Hart$^{\dag}$
 \and Marcelo Sobottka$^\ddag$\footnote{\noindent Corresponding author.\newline
E-mail addresses: \texttt{ahart@dim.uchile.cl} (A. Hart),
\texttt{marcelo.sobottka@ufsc.br} (M. Sobottka).}\\
{\footnotesize $^\dag$ Departamento de Ingenier\'{\i}a Matem\'{a}tica and Centro de
Modelamiento Matem\'{a}tico, Universidad de Chile, Chile.}\\
{\footnotesize $^\ddag$ Departamento de Matem\'{a}tica, Universidade Federal de
Santa Catarina, Brazil} }

\date{\ }
\maketitle

\medskip
\hrule
\medskip

\begin{abstract}
The aim of this work is to provide a rigorous mathematical analysis of a
stochastic concatenation model presented by Sobottka and Hart (2011) which allows approximation of the
first-order stochastic structure in bacterial DNA by means of a stationary
Markov chain. Two probabilistic constructions that
rigorously formalize the model are presented. Necessary and sufficient
conditions for a Markov chain to be generated by the model are given, as well as the theoretical background
needed for designing new algorithms for statistical analyses of real
bacterial genomes. It is shown that the model encompasses the Markov chains
satisfying intra-strand parity, a property observed in most DNA
sequences.
\end{abstract}

{\small\noindent
Keywords: stochastic matrix,  Markov chain, DNA
sequence.

\noindent
2010 MSC:   60J10, 15B51, 92D20, 92C40.
}

\section{Introduction}
\label{sec:intro}

One of the fundamental issues in the study of genomes is their primary
structure, that is, the distribution of nucleotides along DNA sequences.  The identification of
statistical patterns in the primary structure of DNA sequences has revealed several underlying patterns in genomes \cite{Cattani12,Li_92,Lobry96-2,sobottka&hart2011} and has enabled scientists to propose models for
evolutive pressures and mutational mechanisms that might act on organisms
\cite{AlbrechtBuehler2006,hart&martinez&olmos2012,sobottka&hart2011} as well as
to construct  bioinformatics tools.
For example, in \cite{Felsenstein81}, a maximum likelihood approach was used to
perform analyses of DNA sequences in order to estimate evolutionary trees,
while in \cite{Yu_et_al2000}, a measure of the long-range correlation between
the nucleotide bases of DNA sequences was used to classify bacteria. In
addition, strand compositional asymmetry (SCA) was used to detect replication
origins in bacteria \cite{FrankLobry00}, while \cite{Salzberg_et_al98} used
interpolated Markov models to identify genes in bacteria,
\cite{hart&martinez&videla2006} proposed a maximization model to describe the
organization and distribution of genes in bacterial DNA and \cite{martinez2016}
presented a stationary stochastic process for modeling the placement of coding
and non-coding regions within a genome that incorporates the phenomenon of
start codons appearing within coding regions.

The aim of this work is to provide a rigorous formalization of a
stochastic concatenation model for capturing the primary structure of bacterial DNA sequences which
was presented in \cite{sobottka&hart2011}. The model, henceforth referred to as
the S-H model, allowed novel statistical symmetries in the mononucleotide and dinucleotide distributions of a collection of bacterial
chromosomes to be observed.
A key feature of the model is a persymmetric
matrix of probabilities which plays a role in determining the nucleic acids seen
along a DNA sequence. The persymmetric matrices constitute a special
class of matrices which has been employed in models from various fields (see for
example \cite{Nian1997, Nian&Chu1994,Nield1994}) and which has been widely
studied (see for example \cite{Gutierrez2014,Huang&Cline1972,Reid1997,Xie&Sheng2003}).

A genome is a duplex of DNA strands, each strand consisting of a sequence of
nucleotides. The nucleotides are of four types: adenine ($A$), cytosine ($C$),
guanine ($G$) and thymine ($T$). Of these types, adenine is complementary to
thymine while cytosine is complementary to guanine. Each
nucleotide on one DNA strand pairs with its complement on the opposite strand. This
chemically induced pairing between the two strands causes the strands to assume a ladder-like
arrangement which is then twisted to attain the famous helix.
The chemical composition of DNA molecules endows a strand with an intrinsic
reading direction: each strand can only be read in one direction by the genetic machinery of the cell. Furthermore, the way strands combine to form a
duplex means that the two strands  are read in opposite
directions: they are said to be antiparallel.

We shall identify each nucleotide type with a
number in $N:=\{1,2,3,4\}$ ($A\equiv 1$, $C\equiv 2$, $G\equiv 3$ and $T\equiv
4$). Let
$\alpha:N\rightarrow N$ be the involution which maps each nucleotide to its
complement, that is, $\alpha(i)=5-i$.

The S-H model is a concatenation model which has at its core
a first-order Markov chain whose
one-step transition matrix $P=\bigl(P_{ij}\bigr)_{i,j\in N}$ is derived from a
positive parameter $m$ and a positive persymmetric matrix $\A=\bigl(L_{ij}\bigr)_{i,j\in
N}$:
\begin{equation}
\label{p.form}
P_{ij}=\frac{L_{ij}M_j}{\sum_{k\in N} L_{ik}M_k},
\end{equation}
where $M_1=M_4:=m/(2m+2)$ and $M_2=M_3:=1/(2m+2)$.
The Markov chain governs how the DNA sequence grows in both directions from
an initial nucleotide called the origin
by appending nucleotides
in three steps. {\bf Step 1.}\ a nucleotide of type $j$ is randomly selected
with probability $M_j$. {\bf Step 2.}\ with probability $1/2$, the nucleotide
tries to join the end (consonant with the DNA reading direction ) or
beginning (contrary to the reading direction) of the sequence. {\bf Step 3.} In
the first case, the nucleotide is appended to the sequence with probability
$L_{ij}$, where $i$ is the type of the last nucleotide in the sequence; in
the latter, the nucleotide is prepended to the sequence with probability
$L_{\alpha(k)\alpha(j)}$, where $k$ is the type of the initial nucleotide.
This scheme is
illustrated in Figure \ref{fig:nucleotide_aggregation}.

Provided nucleotides accumulate evenly at the ends of the DNA strand, after
a long time one would obtain (with probability~$1$) a sequence with the initial
nucleotide at its midpoint. One half would be generated by the stationary Markov
chain $(P,\pi)$, where the transition matrix~$P$ is given by~\eqref{p.form} and
the chain's stationary distribution~$\pi$ is the left eigenvector of~$P$
corresponding to the eigenvalue~$1$ normalised to sum to~$1$. The other half
would have distribution given by the stationary Markov chain $(\tilde P,\tilde
\pi)$ where $\tilde \pi_i=\pi_{\alpha(i)}$ and $\tilde
P_{ij}=\frac{\pi_{\alpha(j)}}{\pi_{\alpha(i)}}P_{\alpha(j)\alpha(i)}$, for
$i,j\in N$.
The model is consistent with the observation reported by geneticists that
bacterial DNA sequences are usually composed of two distinct segments called chirochores (see \cite{FrankLobry00}). Furthermore, if one
estimates the transition matrices~$\tilde P$ and~$P$ for the segments prior to
and following the origin nucleotide respectively, one usually finds that
$\tilde
P_{ij}\approx\frac{\pi_{\alpha(j)}}{\pi_{\alpha(i)}}P_{\alpha(j)\alpha(i)}$ (see
\cite{sobottka&hart2011}).

\begin{figure}[!ht]
\centering
\includegraphics[width=.7\linewidth]{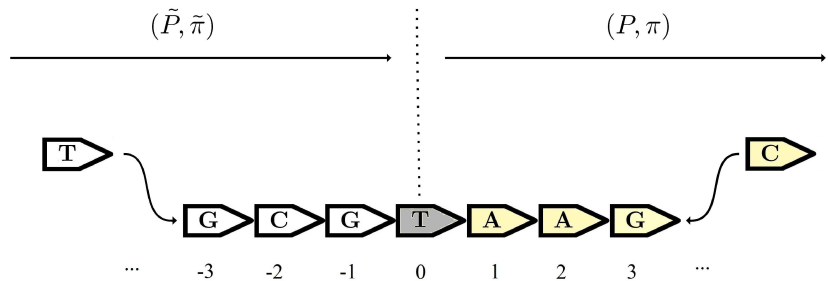}
\caption{A schematic presentation of the S-H model for constructing bacterial
DNA sequences. Assuming the reading sense of the sequence is from left to
right, a new nucleotide of type $C$ is selected with probability $1/(2m+2)$ and
is appended to the end of the sequence with probability $L_{32}$, while a
nucleotide of type $T$ is selected with probability $m/(2m+2)$ and will be
attached to the beginning of the sequence with probability
$L_{\alpha(3)\alpha(4)}$. The final DNA sequence obtained is the concatenation
of two Markovian processes: one starting at position zero and extending to the
right, whose estimated transition matrix is $P$; and the other terminating at zero, whose estimated transition matrix is $\tilde
P$.}\label{fig:nucleotide_aggregation}
\label{fig:1}
\end{figure}

The paper is organized as follows. Section~\ref{sec:interp} discusses the
probabilistic interpretation of the form~\eqref{p.form} of the matrix~$P$ in
greater depth than~\cite{sobottka&hart2011}. Two different probabilistic
constructions are presented, the first of which provides the justification for
the description of DNA sequence growth given above.
Section~\ref{sec:aleph.generated} introduces the set of $\aleph$-generated
matrices as matrices of the form \eqref{p.form}, where $\aleph$ is the set of
positive persymmetric matrices, and establishes several algebraic
characterizations of such matrices.
The non-uniqueness of the persymmetric matrix~$\A$ and positive parameter~$m$
that define an $\aleph$-generated matrix is then considered in
Section~\ref{sec:families}, where a couple of equivalence relations on~$\aleph$
are considered. This leads to an examination of various properties of
$\aleph$-generated matrices as used in the S-H model in
Section~\ref{sec:properties}.
Finally, we discuss some measures for determining how closely a DNA sequence
conforms to the S-H model and make concluding remarks in
Section~\ref{sec:conclusion}.

\section{Probabilistic interpretation of~$P$}
\label{sec:interp}

In~\cite{sobottka&hart2011}, a 	formal description of the way nucleotides are
appended to a DNA sequence using the persymmetric matrix $\A$ and the parameter
$m$ was presented, but the explicit connection with stochastic matrices of the
form~\eqref{p.form} was left for the reader to deduce. Here, we more rigorously
discuss how the form \eqref{p.form} of the stochastic matrix~$P$ arises from
the DNA-sequence growth mechanism described above. In addition, we shall present
an alternative probabilistic interpretation of the growth mechanism.

\subsection{Interpretation}

To begin, consider the growth of a DNA sequence whose initial nucleotide is
taken to be of type~$i$. Let $(\beta_t,\ t\geq0)$ be a Bernoulli scheme on~$N$
with common distribution $M=(M_1, M_2, M_3, M_4)=(m, 1, 1, m)\big/(2m+2)$, that
is, an independent and identically distributed sequence of random variables on $N$ with $\beta_s\sim M$. Consider two coupled stochastic processes
$(V_t,\ t\geq0)$, which evolves on the state space~$N$, and $(W_t,\ t\geq0)$,
which is a Bernoulli $\{0,1\}$-process where  $W_t$ is~$1$
with probability  $L_{V_t\beta_t}$ (that is,
$W_t \sim \textrm{B}\left(L_{V_t\beta_t}\right)$).
By setting $V_0:=i$ as the type of the initial nucleotide from which the DNA
sequence grows, the process $(V_t,\ t\geq0)$ evolves as a deterministic function of $(\beta_t,\ t\geq0)$ and $(W_t,\ t\geq0)$ as follows:
\begin{equation*}
V_{t+1} := \beta_tW_t + V_t(1-W_t)
= \begin{cases}
\beta_t, & \text{if } W_t=1 \\
V_t, & \text{if } W_t=0
\end{cases}\qquad,\qquad \forall t\geq 0.
\end{equation*}
Note that, while $V_t$ denotes the type of the last nucleotide appended to the
sequence, $\beta_t$ corresponds to the mechanism
responsible for proposing the type, say~$j$, of the next nucleotide to concatenate to the
sequence, and $W_t$ corresponds to the mechanism
responsible for accepting or rejecting the new nucleotide in the sequence. If $\beta_t=j$, then~$j$ is accepted as the type of the next
nucleotide provided that $W_t=1$, in which case $V_{t+1}$ is set to~$j$.
Otherwise, the nucleotide of type~$j$ is discarded and no nucleotide is
appended. In that case, $V_{t+1}$ takes the value of~$V_t$. In
this way,~$t$ counts the number of nucleotides proposed rather than the length
of the DNA sequence while the number of acceptances, given by $\sum_{u=1}^tW_u$,
is one less than the length of the DNA sequence, since it doesn't count the initial nucleotide.
For all $i\in N$ and $t\geq0$, we define
\begin{align*}
\gamma_i &:= \Pr(W_t=1 \given V_t=i)
= \sum_{j\in N} \Pr(W_t=1, \beta_t=j \given V_t=i) \\
&= \sum_{j\in N} \Pr(W_t=1\given \beta_t=j, V_t=i) \Pr(\beta_t=j \given V_t=i) \\
&= \sum_{j\in N} \Pr(W_t=1\given \beta_t=j, V_t=i) \Pr(\beta_t=j)
= \sum_{j\in N} L_{ij}M_j.
\end{align*}

Next, define a sequence $(\tau_s,\ s\geq0)$ of stopping times by $\tau_0:=0$ and
$$
\tau_{s+1} := \min\left\{t>\tau_s \suchthat W_{t-1}=1\right\}.
$$
The $\tau_s$'s mark the nucleotide type proposals that were accepted. By
construction, they constitute a series of renewal times. Note that $(V_t,\
t\geq0)$ is a discrete step function which transitions to a new nucleotide
whenever $t\in\{\tau_s,\ s\geq0\}$. More precisely, for all $s\geq0$,
$V_t=V_{\tau_s}$ for $t=\tau_s, \tau_s+1, \ldots, \tau_{s+1}-1$.
Let $i\in N$ and $w\in\{0,1\}$. The random variable $\beta_t$ is
independent of $W_u$ for $u<t$ and the distribution of $W_t$ is completely
determined by the value of $\beta_t$ and $V_t$. Consequently, the event
$\{W_t=w$ is conditionally independent of $\{W_u=0\}$ for all $u<t$ given
$V_t=i$.
For $i\in N$ and $t> u\geq0$, we have
\begin{align*}
\Pr(W_t=1, W_{t-1}=0, \ldots, W_u=0 \given V_u=i)
&= \Pr(W_t=1 \given W_{t-1}=0, \ldots, W_u=0,
V_u=i)
\cdot \\
&\qquad\ Pr(W_{t-1}=0, \ldots, W_u=0 \given V_u=i) \\
&= \Pr(W_t=1 \given V_t=i, W_{t-1}=0, \ldots,
W_u=0, V_u=0) \cdot \\
& \qquad
\Pr(W_{t-1}=0, \ldots, W_u=0 \given V_u=i) \\
&= \Pr(W_t=w \given V_t=i) \Pr(W_{t-1}=0, \ldots, W_u=0 \given V_u=i) \\
&= \gamma_i \Pr(W_{t-1}=0, \ldots, W_u=0 \given V_u=i)
\end{align*}
and
$$
\Pr(W_t=0, \ldots, W_u=0 \given V_u=i)
= (1-\gamma_i)\Pr(W_{t-1}=0, \ldots, W_u=0 \given V_u=i) .
$$
Hence, for $s\geq0$, $t\geq1$ and $i\in N$, we obtain
\begin{align*}
\Pr(\tau_{s+1}-\tau_s=t \given V_{\tau_s}=i)
&= \Pr(W_{\tau_s+t-1}=1, w_{\tau_s+t-2}=0, \ldots, W_{\tau_s}=0 \given
V_{\tau_s}=i) \\
&= \Pr(W_{\tau_s+t-2}=0, \ldots, W_{\tau_s}=0 \given V_{\tau_s}=i) \gamma_i \\
&= \Pr(W_{\tau_s+t-3}=0, \ldots, W_{\tau_s}=0 \given V_{\tau_s}=i)
(1-\gamma_i)\gamma_i
\\
&= \cdots \\
&= (1-\gamma_i)^{t-1}\gamma_i.
\end{align*}
Conditional on $V_{\tau_s}=i$, $\tau_{s+1}-\tau_s$ is thus a geometric random
variable taking values on the positive integers:
$$
\tau_{s+1}-\tau_s \given V_{\tau_s}=i \sim \textrm{geom}\bigl( \gamma_i\bigr),
\quad s\geq0,\ i\in N.
$$
Observe that the distribution of $\tau_{s+1}-\tau_s$ is completely determined by
the value of $V_{\tau_s}$ and is independent of any events prior to $\tau_s$ if
$V_{\tau_s}$ is given.
Furthermore, $\tau_{s+1}-\tau_s \given V_{\tau_s}=i$ is identically distributed
as $\tau_1 \given V_0=i$, for all $s>0$.

Next, define the process $(U_s,\ s\geq0)$ by $U_s := V_{\tau_s}$. Suppose that
$V_{\tau_s}=i$ for some fixed $s\geq0$. Then $V_{\tau_{s+1}}$ is determined by
$\beta_{\tau_{s+1}-1}$ and $V_{\tau_s}=\beta_{\tau_s-1}$, which are independent
of all $\beta_t$, $V_t$ and $W_t$ for all~$t$ prior to $\tau_s-1$. Consequently,
$(U_s,\ s\geq0)$ has the Markov property:
$$
\Pr(U_{s+1}=j \given U_s=i, U_{s-1}=i_1, \ldots, U_0=i_s) = \Pr(U_{s+1}=j
\given U_s=i),
$$
for all $i_1,i_2,\ldots, i_s\in N$ and $s\geq0$. Finally, since each $\tau_s$
essentially marks a point at which  the process $\bigl((\beta_t, V_t,
W_t),t\geq0\bigr)$ is restarted, we have
$$
\Pr(U_{s+1}=j \given U_s=i)
= \Pr(V_{\tau_{s+1}}=j \given V_{\tau_s}=i)
= \Pr(V_{\tau_1}=j \given V_{\tau_0}=i)
= \Pr(U_1=j \given U_0=i)=:P_{ij},
$$
for all $s\geq0$. Therefore, $(U_s,\ s\geq0)$ is a
time-homogeneous Markov chain on the finite state space~$N$. The following
theorem gives the form of the one-step transition matrix $P=\bigl( P_{ij} \bigr)_{i,j\in N}$ in terms
of~$\A$ and~$M$.

\begin{theo}
\label{thm:interp1}
The one-step transition matrix $P=\bigl(P_{ij}\bigr)_{i,j\in N}$ of the Markov
chain $(U_s,\ s\geq0)$ is given by
 $$
P_{ij}:=\frac{L_{ij}M_j}{\sum_{k\in N} L_{ik}M_k}.
$$
\end{theo}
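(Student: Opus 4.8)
The plan is to compute $P_{ij}=\Pr(U_1=j\given U_0=i)=\Pr(V_{\tau_1}=j\given V_0=i)$ directly by conditioning on the value of $\tau_1$. Since $V_{\tau_1}$ is the type of the nucleotide proposed at the accepted step $\tau_1-1$, and at that step acceptance occurred (i.e. $W_{\tau_1-1}=1$), we have $V_{\tau_1}=\beta_{\tau_1-1}$. So I would write
\begin{equation*}
\Pr(V_{\tau_1}=j\given V_0=i)
= \sum_{t\geq1}\Pr\bigl(\tau_1=t,\ \beta_{t-1}=j\given V_0=i\bigr)
= \sum_{t\geq1}\Pr\bigl(W_{t-1}=1,\ W_{t-2}=0,\ldots,W_0=0,\ \beta_{t-1}=j\given V_0=i\bigr).
\end{equation*}
The first step is to peel off the last factor. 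On the event $\{W_{t-2}=0,\ldots,W_0=0\}$ with $V_0=i$ the process has not moved, so $V_{t-1}=i$; conditioning further on $\beta_{t-1}=j$ forces $W_{t-1}\sim\mathrm{B}(L_{ij})$, independently of the earlier $W_u$'s (this uses exactly the conditional independence already established in the excerpt, namely that $\beta_{t-1}$ is independent of $W_u$ for $u<t-1$ and that the law of $W_{t-1}$ is determined by $\beta_{t-1}$ and $V_{t-1}$). Hence the summand factors as
\begin{equation*}
\Pr\bigl(W_{t-1}=1\given \beta_{t-1}=j,\ V_{t-1}=i\bigr)\,\Pr(\beta_{t-1}=j)\,\Pr\bigl(W_{t-2}=0,\ldots,W_0=0\given V_0=i\bigr)
= L_{ij}M_j\,(1-\gamma_i)^{t-1},
\end{equation*}
where the last equality reuses the recursion $\Pr(W_{t-2}=0,\ldots,W_0=0\given V_0=i)=(1-\gamma_i)^{t-1}$ derived just above the theorem.

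Summing the geometric series over $t\geq1$ then gives $\Pr(V_{\tau_1}=j\given V_0=i)=L_{ij}M_j\sum_{t\geq1}(1-\gamma_i)^{t-1}=L_{ij}M_j/\gamma_i$, and substituting the formula $\gamma_i=\sum_{k\in N}L_{ik}M_k$ (also established in the excerpt) yields the claimed expression for $P_{ij}$. To be fully careful I would note $\gamma_i>0$ since $\A$ and $M$ are strictly positive, so the division is legitimate, and that the series is absolutely convergent for the same reason (or because $\tau_1$ is a.s.\ finite).

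The only genuinely delicate point is the factorization step: one must be sure that conditioning simultaneously on the whole block $\{W_{t-2}=0,\ldots,W_0=0\}$ and on $\beta_{t-1}=j$ does not disturb the conditional law of $W_{t-1}$ beyond what is captured by $(V_{t-1},\beta_{t-1})$. This is where I would lean on the structural observations already made in the text — that $\beta_{t-1}\perp W_u$ for $u<t-1$ and that $W_{t-1}$'s distribution is a deterministic function of $(V_{t-1},\beta_{t-1})$ — to justify that the earlier $W$'s enter only through their effect on $V_{t-1}$, which on this event equals $i$. Everything after that is a routine geometric-series summation.
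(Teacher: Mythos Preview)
Your proof is correct and follows essentially the same route as the paper: both condition on the value of $\tau_1$, factorize the joint probability $\Pr(V_{\tau_1}=j,\ \tau_1=t\given V_0=i)$ using the conditional-independence structure to extract the factor $L_{ij}M_j$, and then sum over $t$. The only cosmetic difference is in the last step: the paper rewrites the summand as $\dfrac{L_{ij}M_j}{\sum_k L_{ik}M_k}\Pr(\tau_1=t\given V_0=i)$ and uses $\sum_t\Pr(\tau_1=t\given V_0=i)=1$, whereas you substitute the explicit value $(1-\gamma_i)^{t-1}$ and sum the geometric series to get $1/\gamma_i$; these are the same computation packaged two ways.
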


\begin{proof}
Let $\tau:=\tau_1$. Now,
\begin{align}
\nonumber
P_{ij} &= \Pr( U_1=j \given U_0=i) \\
\nonumber
&= \Pr(V_\tau=j \given V_0=i) \\
\nonumber
&= \sum_{t=1}^\infty \Pr(V_t=j, \tau=t \given V_0=i) \\
\nonumber
&= \sum_{t=1}^\infty \frac{\Pr(V_t=j, \tau=t \given V_0=i)}{\Pr(\tau=t
\given V_0=i)}\Pr(\tau=t \given V_0=i) \\
\label{eqn:p.coin}
&= \sum_{t=1}^\infty \frac{\Pr(V_t=j, \tau=t \given V_0=i)}{\sum_{k\in N}
\Pr(V_t=k, \tau=t \given V_0=i)}\Pr(\tau=t \given V_0=i).
\end{align}
However,
\begin{align*}
\Pr(V_t=j, \tau=t & \given V_0=i) \\
&= \Pr(\beta_{t-1}=j, W_{t-1}=1, \tau=t \given V_0=i) \\
&= \Pr(\beta_{t-1}=j, W_{t-1}=1, W_u=0, u=1,\ldots,t-2 \given V_0=i) \\
&= \Pr(\beta_{t-1}=j, W_{t-1}=1 \given V_0=i, W_u=0, u=1,\ldots,t-2) \Pr(W_u=0,
u=1,\ldots,t-2 \given V_0=i) \\
&= \Pr(\beta_{t-1}=j, W_{t-1}=1 \given V_{t-1}=i) \Pr(W_u=0, u=1,\ldots,t-2
\given V_0=i) \\
&= \Pr(W_{t-1}=1 \given \beta_{t-1}=j, V_{t-1}=i) \Pr(\beta_{t-1}=j \given
V_{t-1}=i) \Pr(W_u=0, u=1,\ldots,t-2 \given V_0=i) \\
&= L_{ij}M_j \Pr(W_u=0, u=1,\ldots,t-2 \given V_0=i)
\end{align*}
and substituting this into \eqref{eqn:p.coin} yields
\begin{align*}
P_{ij}
&= \sum_{t=1}^\infty \frac{\Pr(V_t=j, \tau=t \given V_0=i)}{\sum_{k\in N} \Pr(V_t=k, \tau=t \given V_0=i)}\Pr(\tau=t \given V_0=i) \\
&= \sum_{t=1}^\infty \frac{L_{ij}M_j \Pr(W_u=0, u=1,\ldots,t-2 \given
V_0=i)}{\sum_{k\in N} L_{ik}M_k \Pr(W_u=0, u=1,\ldots,t-2 \given V_0=i)}
\Pr(\tau=t \given V_0=i) \\
&= \sum_{t=1}^\infty \frac{L_{ij}M_j}{\sum_{k\in N} L_{ik}M_k} \Pr(\tau=t \given V_0=i) \\
&= \frac{L_{ij}M_j}{\sum_{k\in N} L_{ik}M_k}.
\qedhere
\end{align*}
\end{proof}

Clearly, the matrix~$P$ is invariant to rescaling~$\A$. The only effect
of rescaling~$\A$ by some constant, say~$h$, is to multiply the mean
$1/\gamma_i$ of the distribution of $\tau_{s+1}-\tau_s \given V_{\tau_s}=i$
by a factor of $1/h$. Of course, while such scaling  preserves the persymmetry
of~$\A$, it only makes sense if $0<h< \min\{1/\gamma_i \suchthat i\in N\}$.

\subsection{Alternative interpretation}

There is another way to represent how new nucleotides are added
to a DNA sequence which provides an alternative derivation of the Markov chain
on~$N$ with one-step transition matrix~$P$ of the form \eqref{p.form}.
Let $(Y_s,\ s\geq0)$ be a Markov chain on the set of nucleotides~$N$ with transition matrix $K=(K_{ij})_{i,j\in N}$ given by
$K_{ij}=L_{ij}\big/\sum_{k\in N}L_{ik}$. Thus, the one-step transition matrix of
$(Y_s,\ s\geq0)$ is obtained by converting the positive persymmetric~$\A$ into a
stochastic matrix by normalizing its
rows to sum to unity.
Next, let $(B_s,\ s\geq0)$ be a Bernoulli scheme on~$N$ with common distribution
$M$.
Since $(Y_s,\ s\geq0)$ is a positive recurrent Markov chain on the finite state
space~$N$ and $(B_s,\ s\geq0)$ is an i.i.d. sequence also on~$N$ that is
independent of $(Y_s,\ s\geq0)$, the joint process $\left(\bigl(Y_s,B_s\bigr),\
s\geq0\right)$ is a positive recurrent Markov chain on the state space $N\times
N$ with one-step transition matrix $\left(R_{(i,k),(j,l)}
\right)_{(i,k), (j,l)\in N^2}$ given by $R_{(i,k),(j,l)} = K_{ij}M_l$.

We shall assume without loss of generality that $Y_0=B_0$. Define a sequence of
stopping times $(T_s,\ s\geq0)$ by $T_0:=0$ and
$$
T_{s+1}:=\min\{t>T_s+1 \suchthat Y_{t-1}=B_{t-1}=Y_{T_s} \text{ and }
Y_t=B_t\},
$$
for $s\geq0$. By definition, $Y_{T_s}=B_{T_s}$ for all $s\geq0$ and
$Y_{T_s-1}=B_{T_s-1}$ for all $s\geq1$.
Observe that if $Y_{T_s}$ and $B_{T_s}$ are given, for example, $Y_{T_s}=B_{T_s}=i$, then
\begin{align*}
T_{s+1}-T_s
&=\min\{t>T_s+1 \suchthat Y_{t-1}=B_{t-1}=Y_{T_s} \text{ and }
Y_t=B_t\} - T_s
\\
&=\min\{t>1 \suchthat Y_{t-1}=B_{t-1}=i \text{ and } Y_t=B_t\}.
\end{align*}
    Thus, $T_{s+1}-T_s$ is independent of $T_s$ if
$Y_{T_s}$ is given.
Furthermore, $T_{s+1}-T_s \given Y_{T_s}=i$ has the same distribution as
$T_1\given Y_0=i$. Thus, each $T_s$ is a renewal time at which the Markov chain
$\left(\bigl(Y_s,B_s\bigr),\ s\geq0\right)$ is restarted.

Next, define the stochastic process $(X_s,\ s\geq0)$ by $X_s:=Y_{T_s}$. Since
$\left( \bigl(Y_s,B_s\bigr),\ s\geq0\right)$ is a Markov chain and $(T_s,\
s\geq0)$ is a sequence of stopping times at which it renews, one may employ the
strong Markov property to deduce that $(X_s,\ s\geq0)$ is also a Markov chain.
It only remains to compute its one-step transition matrix.

\begin{theo}
\label{thm:interp2}
The Markov chain $(X_s,\ s\geq0)$ has
one-step transition matrix $P=\left(P_{ij}\right)_{i,j\in N}$, where
$$
P_{ij}:=\frac{L_{ij}M_j}{\sum_{k\in N} L_{ik}M_k}.
$$
\end{theo}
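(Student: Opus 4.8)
The plan is to reduce the claim to evaluating $\Pr(Y_{T_1}=j\given Y_0=i)$ and then to read this probability off the excursion structure of the joint chain $\bigl((Y_s,B_s),\ s\geq0\bigr)$. By the renewal property of the $T_s$ and the strong Markov property noted above, $(X_s,\ s\geq0)$ is a time-homogeneous Markov chain, so it suffices to compute $P_{ij}=\Pr(X_1=j\given X_0=i)=\Pr(Y_{T_1}=j\given Y_0=i)$; as already remarked, we may assume $Y_0=B_0=i$, so that $T_1=\min\{t\geq2\suchthat Y_{t-1}=B_{t-1}=i\text{ and }Y_t=B_t\}$. The structural observation driving the argument is that the constraint $Y_{T_1-1}=B_{T_1-1}=i$ forces $T_1-1$ to be one of the successive return times of $\bigl((Y_s,B_s),\ s\geq0\bigr)$ to the state $(i,i)$, and that the step immediately following that return must be a \emph{coincidence}, that is, $Y=B$.

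The first step is to set up an excursion decomposition of the joint chain at $(i,i)$. Let $0=R_0<R_1<R_2<\cdots$ enumerate the visits of $\bigl((Y_s,B_s),\ s\geq0\bigr)$ to $(i,i)$. Since this chain is irreducible and positive recurrent on $N\times N$ (recall $R_{(i,k),(j,l)}=K_{ij}M_l>0$ for all states), every $R_n$ is almost surely finite and, by the strong Markov property, the excursions $\bigl((Y_t,B_t)\bigr)_{t=R_{n-1}}^{R_n}$, $n\geq1$, are i.i.d. For each $n\geq1$ let $\xi_n:=\ind\{Y_{R_{n-1}+1}=B_{R_{n-1}+1}\}$ and, on the event $\{\xi_n=1\}$, let $\eta_n:=Y_{R_{n-1}+1}$ be the common value reached at the first step of that excursion. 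Each $(\xi_n,\eta_n)$ is a function of the $n$-th excursion alone, hence the pairs $(\xi_n,\eta_n)$ are i.i.d., with $\Pr(\xi_n=1,\ \eta_n=j)=R_{(i,i),(j,j)}=K_{ij}M_j$ and therefore $q_i:=\Pr(\xi_n=1)=\sum_{k\in N}K_{ik}M_k>0$. Matching indices with the previous paragraph gives $T_1=R_{N-1}+1$, where $N:=\min\{n\geq2\suchthat\xi_n=1\}$, which is finite almost surely because $q_i>0$; consequently $Y_{T_1}=\eta_N$.

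It then remains to identify the law of $\eta_N$. Conditioning on $\{N=n\}$ and using the independence of the pairs $(\xi_n,\eta_n)$, one obtains the standard ``mark attached to the first success'' identity
$$
P_{ij}=\Pr(\eta_N=j)=\Pr(\eta_n=j\given\xi_n=1)=\frac{K_{ij}M_j}{\sum_{k\in N}K_{ik}M_k},
$$
and substituting $K_{ij}=L_{ij}\big/\sum_{k\in N}L_{ik}$ cancels the row sum $\sum_{k}L_{ik}$ from numerator and denominator to leave $P_{ij}=L_{ij}M_j\big/\sum_{k\in N}L_{ik}M_k$, as claimed.

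I expect the main obstacle to be pinning down the correct renewal skeleton: one must realise that the i.i.d.\ objects are the excursions of the \emph{joint} chain away from $(i,i)$ --- not, for instance, the states visited one step after each return to $(i,i)$, which fail to be mutually independent for a Markov chain --- and then keep the bookkeeping straight so that $Y_{T_1}$ is precisely the ``mark'' $\eta_N$ carried by the first excursion whose opening step is a coincidence. Once this structure is in place, what remains (the geometric summation and the passage from $K$ back to $\A$) is entirely routine.
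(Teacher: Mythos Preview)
Your argument is correct. The identification $T_1=R_{N-1}+1$ with $N=\min\{n\geq2\suchthat\xi_n=1\}$ is right (the restriction $n\geq2$ properly encodes the constraint $T_1>T_0+1$), the excursions between successive visits of the joint chain to $(i,i)$ are i.i.d.\ by the strong Markov property, and the ``mark attached to the first success'' computation gives $P_{ij}=K_{ij}M_j\big/\sum_k K_{ik}M_k$ as needed.

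Your route, however, differs from the paper's. The paper does not introduce the return times $R_n$ or any excursion decomposition; it simply conditions on the value of $T=T_1$, writes
\[
P_{ij}=\sum_{t\geq2}\Pr(Y_t=j\given T=t,\,X_0=i)\,\Pr(T=t\given X_0=i),
\]
and then uses the Markov property of $\bigl((Y_s,B_s)\bigr)$ at time $t-1$ together with the fact that $\{T=t\}$ forces $Y_{t-1}=B_{t-1}=i$ and $Y_t=B_t$: given $Y_{t-1}=i$, the pair $(Y_t,B_t)$ is conditionally independent of the earlier history, so the first factor equals $K_{ij}M_j\big/\sum_k K_{ik}M_k$ for every $t$, and the sum over $t$ collapses. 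Your approach packages the same independence via the i.i.d.\ excursion blocks; this is more structural and makes the geometric waiting time transparent, at the cost of a little more setup. The paper's approach is shorter and avoids naming the returns $R_n$ at all.

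One small correction to your closing paragraph: the states $(Y_{R_n+1},B_{R_n+1})$ visited one step after each return to $(i,i)$ \emph{are} in fact i.i.d.\ for a Markov chain, again by the strong Markov property at each $R_n$; your caveat there is unnecessary (and the proof does not rely on it).
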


\begin{proof}
Fix $X_0=B_0=i$ and let $T:=T_1$. Then,
\begin{align*}
P_{ij} &= \Pr(X_1=j\given X_0=i) \\
&= \sum_{t=2}^\infty \Pr(Y_T=j, T=t \given X_0=i) \\
  &= \sum_{t=2}^\infty \Pr(Y_t=j \given T=t, X_0=i)\Pr(T=t \given X_0=i) \\
&= \sum_{t=2}^\infty \Pr(Y_t=j, B_t=j \given Y_t=B_t, Y_{t-1}=i, B_{t-1}=i,
Y_{t-2}\neq B_{t-2}, \ldots, Y_2\neq B_2, Y_1\neq B_1, Y_0=i, B_0=i)
\cdot \\
& \quad \Pr(T=t\given X_0=i) \\
&= \sum_{t=2}^\infty \Pr(Y_t=j, B_t=j \given Y_t=B_t,
Y_{t-1}=i)\Pr(T=t\given X_0=i)
\\
&= \sum_{t=2}^\infty \frac{\Pr(Y_t=j, B_t=j \given
Y_{t-1}=i)}{\Pr(Y_t=B_t \given Y_{t-1}=i)}\Pr(T=t\given
X_0=i)
\\
&= \sum_{t=2}^\infty \frac{\Pr(Y_t=j, B_t=j
\given Y_{t-1}=i)}{\sum_{k\in N}\Pr(Y_t=k, B_t=k \given
Y_{t-1}=i)}\Pr(T=t\given X_0=i)
\\
&= \sum_{t=2}^\infty \frac{K_{ij}M_j}{\sum_{k\in N} K_{ik}M_k}\Pr(T=t\given
X_0=i) \\
&= \frac{L_{ij}M_j}{\sum_{k\in N} L_{ik}M_k} \sum_{t=2}^\infty \Pr(T=t\given
X_0=i) \\
&= \frac{L_{ij}M_j}{\sum_{k\in N} L_{ik}M_k},
\end{align*}
since
$$
\sum_{t=2}^\infty \Pr(T=t\given X_0=i)=1
$$
and
\[
\Pr\{Y_t=j, B_t=j \given Y_{t-1}=i)
=\Pr\{Y_t=j \given Y_{t-1}=i)\Pr(B_t=j) =K_{ij}M_j.
\qedhere
\]
\end{proof}

Thus, the mechanism by which nucleotides are appended to a DNA sequence
according to a Markov chain with transition matrix~$P$ may also be described as
follows.
Suppose that the last nucleotide in the sequence is of type~$i$. Then, one
simply waits until both the Markov chain $(Y_s)$ and the i.i.d. sequence $(B_s)$
simultaneously return to state~$i$ and both immediately jump to the same state,
say~$j$. When such a consecutive pair of concordant events occurs, a nucleotide
of type~$j$ is appended to the sequence. At this point, this scheme is repeated,
but using~$j$ as the initial state, so that one waits for A coincident return of
the two processes to state~$j$ followed by simultaneous transitions to a new
state, say~$k$, and so on. The Markov chain $Y_s$ transitions from~$i$ to~$j$
with probability $K_{ij}$ while $B_s$ selects~$j$ with probability~$M_j$.
In contrast to the original description given in~\cite{sobottka&hart2011} and in
Section~\ref{sec:intro}, two nucleotides of types~$j$ and~$k$ are selected with
probabilities~$M_j$ and~$K_{ij}$ respectively and a nucleotide of type~$j$ is
then appended to the end of the sequence if and only if they are of the same
type.
In essence, the mechanism by which nucleotides are appended to the DNA sequence
can be thought of as carrying out acceptance rejection sampling, by repeatedly
drawing independent sample nucleotides from the distributions $(K_{ij},\ j\in
N)$ and~$M$ until they agree (assuming~$i$ is the type of the nucleotide at the
end of the sequence). In this case, the number of draws needed in order to
obtain a suitable nucleotide is a geometric random variable with mean
$1\big/\sum_{j\in N}K_{ij}M_j$.
The first interpretation also amounts to performing acceptance-rejection
sampling, but with a two-step procedure in which a nucleotide type~$j$ is first
proposed by sampling it from the distribution~$M$ and then is added to the DNA
sequence according to an unfair coin toss with probability~$L_{ij}$.

Finally, we note that if the matrix~$\A$ is rescaled so that
$\sum_{i,j\in N}L_{ij}=1$, it admits the natural interpretation as the
stationary dinucleotide probability distribution, that is,
$$
L_{ij} = \Pr(Y_t=i, Y_{t+1}=j),
\qquad i,j\in N,\ t\geq0.
$$
 As noted above, $\A$ remains persymmetric under this kind of rescaling.

\section{$\aleph$-generated matrices}
\label{sec:aleph.generated}

Let $\sS_4$ be the set of all $4\times4$ stochastic matrices, and let $\aleph$
be the cone of positive persymmetric matrices (matrices
$\A=(L_{ij})_{i,j\in N}$ with positive entries such that
$L_{ij}=L_{\alpha(j)\alpha(i)}$ for all $i,j\in N$). Given $P\in \sS_4$ we will
say that $(P,\pi)$ is a stationary Markov chain if the vector
$\pi=(\pi_i)_{i\in N}$ is such that $\pi P=\pi$.

Let $\digamma:\aleph\times (0,+\infty)\to S_4$ be the map which takes $(\A,m)$
to the matrix $\digamma(\A,m)$, which is  given for all $i,j\in N$ by
\begin{equation}
\label{digamma}
\left(\digamma(\A,m)\right)_{ij} := \frac{L_{ij}M_j}{\sum_{k=1}^4
L_{ik}M_k},
\qquad \text{where}\qquad
M_\ell=\left\{\begin{array}{lll}
m/(2m+2) &\text{, if}& \ell=1,4, \\
1/(2m+2)&\text{, if}& \ell=2,3.
                                                                                                                        \end{array}\right.
\end{equation}

Since $\A$ is a positive matrix and $m>0$, the matrix $\digamma(\A,m)$ is primitive, that is, irreducible and aperiodic.

\begin{defn}
\label{defn:aleph.generated}
We say that $P\in\sS_4$ is $\aleph$-generated if there exist $(\A,m)\in \aleph\times (0,+\infty)$ such that
$P=\digamma(\A,m)$.
\end{defn}

Let $\Phi:\sS_4\times (0,+\infty)\times (0,+\infty)\to\aleph$ be the map defined for all stochastic matrices $P$, $\tilde{m}>0$ and $\tilde{s}>0$ by
\begin{equation}
\label{AlephEstimated}
\Phi(P,\tilde{m},\tilde{s}):=
\tilde{s} \begin{pmatrix}
a^{11}_{_P}\kappa_{_P}/\tilde{m}       & a^{12}_{_P}                   & 1                 & \kappa_{_P}/\tilde{m} \\\\
a^{21}_{_P}                     & a^{22}_{_P}\epsilon_{_P} \tilde{m}   & \epsilon_{_P} \tilde{m}  & 1 \\\\
a^{21}_{_P}a^{42}_{_P}                    & a^{22}_{_P}\epsilon_{_P} a^{32}_{_P} \tilde{m} & a^{22}_{_P}\epsilon_{_P} \tilde{m} & a^{12}_{_P} \\\\
a^{11}_{_P}a^{41}_{_P}\kappa_{_P}/\tilde{m}      & a^{21}_{_P}a^{42}_{_P}                  & a^{21}_{_P}                 & a^{11}_{_P}\kappa_{_P}/\tilde{m}
\end{pmatrix},
\qquad\text{where}\qquad
\begin{array}{lcl}
a^{ij}_{_P}        & := & P_{ij}/P_{i\alpha(j)};\\\\
\kappa_{_P}   & := & P_{14}\big/P_{13};\\\\
\epsilon_{_P} & := & P_{23}\big/P_{24}.
\end{array}
\end{equation}

From \eqref{digamma} it follows that if $P$ is an $\aleph$-generated matrix for
some $\A=\bigl(L_{ij}\bigr)_{i,j\in N}\in\aleph$ and $m\in(0,+\infty)$,
then the nine ratios that appear in \eqref{AlephEstimated} become:
\begin{equation}\label{nine_ratios}
a^{ij}_{_P}         =   L_{ij}/L_{i\alpha(j)},\qquad   \kappa_{_P}    =
L_{14}m/L_{13} \qquad \text{and}\qquad    \epsilon_{_P}  =   L_{23}/L_{24}m.
\end{equation}

\begin{theo}
\label{digamma_inv}
For any $\aleph$-generated matrix $P$,
$$
\digamma^{-1}(P) = \left\{\bigl(\Phi(P,\tilde{m},\tilde{s}),\tilde{m}\bigr):\ \tilde{m}>0,\ \tilde{s}>0\right\}.
$$
\end{theo}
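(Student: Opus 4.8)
The statement asserts that the fibre $\digamma^{-1}(P)$ over an $\aleph$-generated matrix $P$ is exactly the two-parameter family $\{(\Phi(P,\tilde m,\tilde s),\tilde m):\tilde m,\tilde s>0\}$. Since this is a set equality, the plan is to prove two inclusions.

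For the inclusion $\supseteq$: I would fix $\tilde m,\tilde s>0$, write $\tilde\A:=\Phi(P,\tilde m,\tilde s)$, and verify directly that $\digamma(\tilde\A,\tilde m)=P$. This requires two checks. First, $\tilde\A\in\aleph$: positivity is immediate since all nine ratios $a^{ij}_{_P},\kappa_{_P},\epsilon_{_P}$ are positive (as $P$ is $\aleph$-generated, hence primitive, so all $P_{ij}>0$), and persymmetry $(\tilde\A)_{ij}=(\tilde\A)_{\alpha(j)\alpha(i)}$ can be read off by comparing the $(i,j)$ and $(5-j,5-i)$ entries of the displayed matrix in \eqref{AlephEstimated} — e.g. entry $(1,2)$ is $\tilde s\,a^{12}_{_P}$ and entry $(\alpha(2),\alpha(1))=(3,4)$ is also $\tilde s\,a^{12}_{_P}$, and so on through the six off-diagonal pairs plus the two diagonal pairs $(1,1)\!\leftrightarrow\!(4,4)$ and $(2,2)\!\leftrightarrow\!(3,3)$. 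Second, $\digamma(\tilde\A,\tilde m)=P$: here I would use \eqref{digamma}, noting that the factors $\tilde s$ and the occurrences of $\tilde m$ in the $M_\ell$ of $\digamma(\cdot,\tilde m)$ are designed to cancel, and reduce the claim to checking that $(\tilde\A)_{ij}M_j\big/(\tilde\A)_{i\alpha(j)}M_{\alpha(j)}=P_{ij}/P_{i\alpha(j)}$ for each $i$ and each pair $\{j,\alpha(j)\}$, together with the normalisation. Because each row of $P$ sums to $1$, knowing all four ratios $P_{ij}/P_{i1}$ (say) within a row determines the row; so it suffices to match the ratios $a^{ij}_{_P}=P_{ij}/P_{i\alpha(j)}$ and, within each row, relate the two "halves" $\{P_{i1},P_{i4}\}$ and $\{P_{i2},P_{i3}\}$ using $\kappa_{_P}$ or $\epsilon_{_P}$ — this is precisely the content of \eqref{nine_ratios} run in reverse, and it is a finite, mechanical verification over the four rows.

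For the inclusion $\subseteq$: suppose $(\A,m)\in\digamma^{-1}(P)$, i.e.\ $P=\digamma(\A,m)$ with $\A=(L_{ij})\in\aleph$. I must exhibit $\tilde s>0$ with $\A=\Phi(P,m,\tilde s)$ (note the first parameter is forced to be the given $m$, since $\digamma$ records it). By \eqref{nine_ratios}, $P$ being $\digamma(\A,m)$ forces $a^{ij}_{_P}=L_{ij}/L_{i\alpha(j)}$, $\kappa_{_P}=L_{14}m/L_{13}$, and $\epsilon_{_P}=L_{23}/L_{24}m$. Set $\tilde s:=L_{13}$. Then I claim every entry of $\Phi(P,m,\tilde s)$ equals the corresponding $L_{ij}$: for instance the $(1,3)$ entry is $\tilde s\cdot 1=L_{13}$; the $(1,4)$ entry is $\tilde s\,\kappa_{_P}/m=L_{13}\cdot(L_{14}m/L_{13})/m=L_{14}$; the $(1,1)$ entry is $\tilde s\,a^{11}_{_P}\kappa_{_P}/m=L_{14}\cdot(L_{11}/L_{14})=L_{11}$, and so on. The remaining entries follow either directly from the ratio identities in \eqref{nine_ratios} or, for the third row and the off-diagonal persymmetric partners, from persymmetry of $\A$ (which gives relations like $L_{32}=L_{\alpha(2)\alpha(3)}$ matching the placement of $a^{42}_{_P}$, $a^{32}_{_P}$ in the third row of \eqref{AlephEstimated}). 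Each such check is again a one-line substitution.

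\textbf{Main obstacle.} There is no conceptual difficulty; the work is entirely in the bookkeeping. The subtle point to get right is that $\digamma(\A,m)$ depends on $\A$ only through its row-wise "profile" — rescaling $\A$ by a positive constant leaves $\digamma(\A,m)$ unchanged (as already observed in the text after Theorem~\ref{thm:interp1}) — so the fibre is genuinely one parameter ($\tilde s$) wide for each valid $\tilde m$, and I must make sure the $16$ entries of $\Phi(P,\tilde m,\tilde s)$ are consistent, i.e.\ that $\Phi(P,\tilde m,\tilde s)$ really is persymmetric and really does map back to $P$, rather than merely having the right nine ratios. Verifying persymmetry of $\Phi(P,\tilde m,\tilde s)$ by matching the eight entry-pairs, and confirming that the $7=16-9$ redundant entries are forced correctly by \eqref{nine_ratios} and by the row-sum-to-one constraints on $P$, is the one place where care is needed; once that is done, both inclusions drop out of \eqref{nine_ratios} by inspection.
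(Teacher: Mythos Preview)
Your proposal is correct and follows essentially the same approach as the paper: prove the two inclusions separately, using the ratio identities \eqref{nine_ratios} as the bridge, and in the $\subseteq$ direction take $\tilde s=L_{13}$ so that the $(1,3)$ entry matches and the rest follow. Your write-up is in fact considerably more detailed than the paper's, which dispatches the $\supseteq$ direction with ``it is straightforward to check'' and the $\subseteq$ direction in a single line concluding $\A'=\Phi(P,m',L'_{13})$; the extra verification you outline (persymmetry of $\Phi(P,\tilde m,\tilde s)$, entry-by-entry matching) is exactly the bookkeeping the paper leaves implicit.
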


\begin{proof}
Let $P=\digamma(\A,m)$ for some fixed $\A\in\aleph$ and $m>0$.

Given $\tilde{\A}:=\Phi(P,\tilde{m},\tilde{s})$ for any choice of
$\tilde{m},\tilde{s}>0$, it is straightforward to check that
$\digamma(\tilde{\A},\tilde{m})=\digamma(\A,m)=P$.
Therefore, $\left\{\big(\Phi(P,\tilde{m},\tilde{s}),\tilde{m}\big):\
\tilde{m}>0,\ \tilde{s}>0\right\}\subseteq \digamma^{-1}(P)$.

On the other hand, suppose $\A'=(L'_{ij})_{i,j\in N}\in\aleph$ and $m'>0$ are such that $(\A',m')\in\digamma^{-1}(P)$. Note that, since
$P=\digamma(\A,m)=\digamma(\A',m')$, it follows from \eqref{nine_ratios} that
\begin{equation*}
a^{ij}_{_P}         =  L_{ij}/L_{i\alpha(j)}  =   L'_{ij}/L'_{i\alpha(j)}, \qquad \kappa_{_P}    =  L_{14}m\big/L_{13} =   L'_{14}m'\big/L'_{13}, \qquad \epsilon_{_P}  =  L_{23}\big/L_{24}m =   L'_{23}\big/L'_{24}m'.
\end{equation*}
Hence, $\A'=\Phi(P,m',L'_{13})$ and so
$\digamma^{-1}(P)\subseteq\left\{\bigl(\Phi(P,\tilde{m},\tilde{s}),\tilde{m}\bigr):\
\tilde{m}>0,\ \tilde{s}>0\right\}$,
which completes the proof.\qedhere
\end{proof}

Since $\Phi$ is linear in $\tilde{s}$, instead of working with $\Phi$ we can work with the map $\varphi:\sS_4\times(0,+\infty)\to\aleph$ defined by
\begin{equation}\label{varphi}
\varphi(P,\tilde{m}):=\Phi(P,\tilde{m},1).
\end{equation}
Then, $\digamma^{-1}(P)=\left\{\bigl(\tilde
s\varphi(P,\tilde{m}),\tilde{m}\bigr):\ \tilde{m}>0,\ \tilde{s}>0\right\}$.
The next corollary is a simple consequence of
\eqref{varphi} and Theorem~\ref{digamma_inv}.

\begin{cor}
\label{P-aleph1}
A stochastic matrix $P$ is $\aleph$-generated if and only if $P$ is obtained by
probability-normalizing the rows of the matrix $\A:=\varphi(P,1)$.\qed
\end{cor}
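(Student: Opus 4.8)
The plan is to read the corollary off Theorem~\ref{digamma_inv}, the only new ingredient being the observation that ``probability-normalizing the rows'' of a positive matrix is exactly the operation $\A\mapsto\digamma(\A,1)$. Indeed, putting $m=1$ in~\eqref{digamma} gives $M_1=M_2=M_3=M_4=\tfrac14$, so that for every $\A=(L_{ij})_{i,j\in N}\in\aleph$,
\[
\bigl(\digamma(\A,1)\bigr)_{ij}=\frac{L_{ij}\cdot\tfrac14}{\sum_{k\in N}L_{ik}\cdot\tfrac14}=\frac{L_{ij}}{\sum_{k\in N}L_{ik}},
\]
which is precisely the stochastic matrix obtained by scaling each row of $\A$ to sum to~$1$. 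Thus the assertion ``$P$ is obtained by probability-normalizing the rows of $\A:=\varphi(P,1)$'' is synonymous with the single equation $P=\digamma\bigl(\varphi(P,1),1\bigr)$, and proving the corollary amounts to showing this equation holds if and only if $P$ is $\aleph$-generated.

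For the ``only if'' direction, suppose $P$ is $\aleph$-generated. Then $P$ has strictly positive entries, being of the form~\eqref{digamma} with $\A$ positive and $m>0$, so $\varphi(P,1)=\Phi(P,1,1)$ is well defined. Combining Theorem~\ref{digamma_inv} with~\eqref{varphi} yields $\digamma^{-1}(P)=\bigl\{\bigl(\tilde s\,\varphi(P,\tilde m),\tilde m\bigr):\tilde m>0,\ \tilde s>0\bigr\}$; taking $\tilde m=\tilde s=1$ shows $\bigl(\varphi(P,1),1\bigr)\in\digamma^{-1}(P)$, i.e.\ $\digamma\bigl(\varphi(P,1),1\bigr)=P$, which is the equation sought.

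For the ``if'' direction, suppose $P$ coincides with the row-normalization of $\A:=\varphi(P,1)$, that is, $\digamma(\A,1)=P$ by the observation above. To conclude via Definition~\ref{defn:aleph.generated} it suffices to verify that $\A\in\aleph$. Positivity of $\A=\Phi(P,1,1)$ holds because the ratios $a^{ij}_{_P}$, $\kappa_{_P}$ and $\epsilon_{_P}$ entering~\eqref{AlephEstimated} are positive --- this is exactly what is needed for $\varphi(P,1)$ to be defined, and it amounts to $P$ having strictly positive entries. Persymmetry, $L_{ij}=L_{\alpha(j)\alpha(i)}$, follows from a direct entrywise comparison within the matrix displayed in~\eqref{AlephEstimated} (for instance the $(1,1)$ and $(4,4)$ entries agree, the $(1,3)$ and $(2,4)$ entries are both~$1$, the $(3,1)$ and $(4,2)$ entries are both $a^{21}_{_P}a^{42}_{_P}$, and so on). Hence $\A\in\aleph$, and since $1\in(0,+\infty)$ with $\digamma(\A,1)=P$, the matrix $P$ is $\aleph$-generated.

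I do not anticipate any real obstacle: the substantive content is already packaged in Theorem~\ref{digamma_inv}, and what is left is the routine bookkeeping of the last paragraph together with the elementary remark that the weights $M_\ell$ become uniform when $m=1$. The one point worth flagging is that the statement is implicitly to be read for stochastic matrices $P$ with positive entries, since otherwise $\varphi(P,1)$, and hence the right-hand side of the equivalence, is not defined --- but this is harmless, as every $\aleph$-generated matrix is automatically positive.
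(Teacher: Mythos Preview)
Your proposal is correct and follows essentially the same route as the paper, which simply records the corollary as an immediate consequence of Theorem~\ref{digamma_inv} and~\eqref{varphi} without giving a separate proof. Your explicit observation that $m=1$ collapses the weights $M_\ell$ to $1/4$, so that $\digamma(\cdot,1)$ is row-normalization, is exactly the missing sentence the paper leaves to the reader; your verification of persymmetry in the ``if'' direction is unnecessary since the paper declares the codomain of $\Phi$ to be $\aleph$, but it does no harm.
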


Given a vector $\mathbf{a}=(a_1,a_2,a_3,a_4)\in \R^4$, let $D(\mathbf{a})$ be
the $4\times 4$ diagonal matrix with $\mathbf{a}$ on its diagonal.

\begin{cor}\label{P-aleph2}
A stochastic matrix $P$ is $\aleph$-generated if and only if there exists a
strictly positive vector $\mathbf{x}=(x_i)_{i\in N}\in\R^4$ such that $D(\mathbf{x})P\in\aleph$.
\end{cor}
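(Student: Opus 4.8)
The plan is to derive both implications from Corollary~\ref{P-aleph1}, after the observation that putting $m=1$ in~\eqref{digamma} gives $M_\ell=1/4$ for every $\ell\in N$, so that $\digamma(\A,1)$ is nothing but the matrix obtained by probability-normalizing the rows of $\A$, for any positive persymmetric $\A=(L_{ij})_{i,j\in N}\in\aleph$.

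For the forward implication, I would suppose $P$ is $\aleph$-generated, so that $P$ is positive (in fact primitive, as noted right after~\eqref{digamma}). Then Corollary~\ref{P-aleph1} gives that $\A:=\varphi(P,1)$ lies in $\aleph$ and that $P_{ij}=L_{ij}\big/\sum_{k\in N}L_{ik}$ for all $i,j\in N$. Setting $x_i:=\sum_{k\in N}L_{ik}$, which is strictly positive because $\A$ is positive, this last identity becomes $L_{ij}=x_iP_{ij}$, i.e., $D(\mathbf{x})P=\A\in\aleph$; hence the required strictly positive vector $\mathbf{x}=(x_i)_{i\in N}$ exists.

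For the converse, I would suppose $\mathbf{x}=(x_i)_{i\in N}$ is strictly positive with $\A:=D(\mathbf{x})P\in\aleph$, so $L_{ij}=x_iP_{ij}$ for all $i,j\in N$. Since $P$ is stochastic, the $i$-th row sum of $\A$ equals $\sum_{j\in N}L_{ij}=x_i\sum_{j\in N}P_{ij}=x_i$, and hence probability-normalizing the rows of $\A$ recovers $L_{ij}\big/\sum_{k\in N}L_{ik}=x_iP_{ij}/x_i=P_{ij}$. In other words, $P=\digamma(\A,1)$ with $\A\in\aleph$ and $1\in(0,+\infty)$, so $P$ is $\aleph$-generated by Definition~\ref{defn:aleph.generated}.

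I do not anticipate a genuine obstacle here; the only point requiring care is to keep distinct the two operations that enter the argument, namely left multiplication of $P$ by a positive diagonal matrix $D(\mathbf{x})$ and probability-normalizing rows, and to notice that the former produces a matrix whose $i$-th row sum is exactly $x_i$, so that row-normalization inverts it. Positivity of $P$ in the forward direction is automatic, since every $\aleph$-generated matrix is positive.
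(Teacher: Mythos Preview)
Your proof is correct and follows essentially the same approach as the paper's: for the forward implication you take $\mathbf{x}$ to be the vector of row sums of $\varphi(P,1)$, so that $D(\mathbf{x})P=\varphi(P,1)\in\aleph$, and for the converse you observe that row-normalizing $\A=D(\mathbf{x})P$ recovers $P$, i.e., $P=\digamma(\A,1)$. The only difference is expository: you make explicit the observation that $m=1$ gives $M_\ell\equiv 1/4$ (so $\digamma(\cdot,1)$ is row-normalization), whereas the paper leaves this implicit.
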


\begin{proof}
Suppose that $P$ is $\aleph$-generated and define $\mathbf{x}$ to be the vector
with elements given by
$
x_i:=\sum_{k=1}^4\Bigl(\varphi(P,1)\Bigr)_{ik}
$.\sloppy\
Then, from Corollary~\ref{P-aleph1} we have that
$D(\mathbf{x})P=\varphi(P,1)\in\aleph$.

Conversely, if $D(\mathbf{x})P=\A\in\aleph$ for some $\mathbf x\in\R^4$, then
$P=\digamma(\A,1)$ and $\mathbf x$ contains the row sums of $\A$.\qedhere
\end{proof}

Note that given an $\aleph$-generated matrix $P$, there exist infinitely many
vectors $\mathbf{x}$ that satisfy the stated property, all of which
are collinear.
Because of this,
we can decide whether or not a stochastic matrix is $\aleph$-generated
by setting
$$
\mathbf{x}_P = \left(\frac{P_{j\alpha(i)}}{P_{i\alpha(j)}} \right)_{i\in N}
= \frac1{\sum_{k=1}^4\bigl(\varphi(P,1)\bigr)_{jk}} \left(
\sum_{k=1}^4\bigl(\varphi(P,1)\bigr)_{ik} \right)_{i\in N},
$$
for a fixed $j\in\{1,2,3,4\}$, and checking whether or not $D(\mathbf{x}_P)P$
belongs to $\aleph$. Observe that $\mathbf{x}_P$ is expressed in terms of elements
of~$P$. In particular, we can choose
\begin{equation*}
\mathbf{x}_P=\left(\frac{P_{44}}{P_{11}},\frac{P_{43}}{P_{21}},\frac{P_{42}}{P_{31}},1\right).
\end{equation*}

\section{$\aleph$-families and generators}
\label{sec:families}

From the preceding discussion, it is evident that a given $\aleph$-generated
stochastic matrix can be generated using any one of a multitude of
persymmetric matrices. We proceed to examine this non-uniqueness in greater
detail.

\begin{defn}
The $\aleph$-family of an $\aleph$-generated matrix $P$ is the set
$$
\aleph(P):=\left\{\varphi(P,\tilde{m}):\ \tilde{m}>0\right\}.
$$

The family of generators of an $\aleph$-generated matrix $P$ is the set
$$
\aleph_G(P):=\left\{\bigl(\varphi(P,\tildem),\tildem\bigr):\ \tildem>0\right\}.
$$
\end{defn}

The import of the next theorem is that $\aleph$ can be partitioned into
equivalence classes. Firstly, any persymmetric matrix can be used
to generate a whole host of $\aleph$-generated matrices simply by varying the
value of the parameter~$m$. Thus, there are families of persymmetric
matrices that give rise to disjoint collections of $\aleph$-generated matrices
and these families are mutually exclusive, partitioning the space~$\aleph$ into
equivalence classes. Secondly, for each $\aleph$-generated matrix~$P$, there is
a set of persymmetric matrices, each of which generates~$P$ when combined with
the appropriate value of~$m$. This leads to an equivalence relation on the set
$\aleph\times (0,\infty)$.

\begin{theo}
Suppose $P$ and $Q$ are two $\aleph$-generated matrices. Then:
\begin{enumerate}
\item Either $\aleph(P)\cap\aleph(Q)=\emptyset$ or $\aleph(P)=\aleph(Q)$.

\item
Either
\begin{enumerate}
\item $\aleph_G(P)\cap\aleph_G(Q)=\emptyset$ and $P\neq Q$; or
\item $\aleph_G(P)=\aleph_G(Q)$ and $P=Q$.
\end{enumerate}
\end{enumerate}
\end{theo}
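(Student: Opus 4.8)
The plan is to deduce both dichotomies from Theorem~\ref{digamma_inv} together with the explicit description \eqref{AlephEstimated}--\eqref{varphi} of $\varphi$. Two facts will do the work. First, for every $\aleph$-generated $P$ and every $\tilde m>0$ one has $\digamma(\varphi(P,\tilde m),\tilde m)=P$; this is immediate from Theorem~\ref{digamma_inv}, since $(\varphi(P,\tilde m),\tilde m)\in\digamma^{-1}(P)$. Second, inspecting \eqref{AlephEstimated}--\eqref{varphi} shows that the matrix $\varphi(P,\tilde m)$ depends on $P$ only through the ratios $a^{ij}_{_P}$ that appear in \eqref{AlephEstimated} and through the two combinations $\kappa_{_P}/\tilde m$ and $\epsilon_{_P}\tilde m$; consequently two such matrices $\varphi(P,\tilde m_1)$ and $\varphi(Q,\tilde m_2)$ are equal as soon as $a^{ij}_{_P}=a^{ij}_{_Q}$ for all the relevant $i,j$, $\kappa_{_P}/\tilde m_1=\kappa_{_Q}/\tilde m_2$ and $\epsilon_{_P}\tilde m_1=\epsilon_{_Q}\tilde m_2$.

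\textbf{Part 1.} If $\aleph(P)\cap\aleph(Q)=\emptyset$ there is nothing to prove, so suppose there is a common matrix, say $\A=\varphi(P,a)=\varphi(Q,b)$ with $a,b>0$. By the first fact, $\digamma(\A,a)=P$ and $\digamma(\A,b)=Q$, so applying \eqref{nine_ratios} to each of these two representations of $\A=(L_{ij})$ gives $a^{ij}_{_P}=L_{ij}/L_{i\alpha(j)}=a^{ij}_{_Q}$ for all $i,j$, together with $\kappa_{_P}=L_{14}a/L_{13}$, $\kappa_{_Q}=L_{14}b/L_{13}$, $\epsilon_{_P}=L_{23}/(L_{24}a)$ and $\epsilon_{_Q}=L_{23}/(L_{24}b)$; in particular $\kappa_{_Q}/\kappa_{_P}=b/a=\epsilon_{_P}/\epsilon_{_Q}$. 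I would then check that $\varphi(P,\tilde m)=\varphi(Q,\tilde m\,b/a)$ for every $\tilde m>0$: the two matrices share the ratios $a^{ij}$, while $\kappa_{_Q}/(\tilde m\,b/a)=\kappa_{_P}/\tilde m$ and $\epsilon_{_Q}(\tilde m\,b/a)=\epsilon_{_P}\tilde m$ by the previous identities, so equality follows from the second fact. Since $\tilde m\mapsto\tilde m\,b/a$ is a bijection of $(0,+\infty)$, this yields $\aleph(P)\subseteq\aleph(Q)$, and exchanging the roles of $P$ and $Q$ gives the reverse inclusion, hence $\aleph(P)=\aleph(Q)$.

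\textbf{Part 2.} Here the alternative is governed entirely by whether $P=Q$. If $P=Q$ then $\aleph_G(P)=\aleph_G(Q)$ directly from the definition, so alternative~(b) holds. If $P\neq Q$ I argue by contradiction: suppose $(\A,m)\in\aleph_G(P)\cap\aleph_G(Q)$. Since the second coordinate of any member of $\aleph_G(\cdot)$ is precisely its parameter, this forces $\A=\varphi(P,m)=\varphi(Q,m)$ with the \emph{same} $m$, and then the first fact gives $P=\digamma(\A,m)=Q$, a contradiction; hence $\aleph_G(P)\cap\aleph_G(Q)=\emptyset$ and alternative~(a) holds. The two alternatives in Part~2 are clearly mutually exclusive and exhaustive; and in Part~1 the two alternatives are genuinely exclusive as well, since $\aleph(P)$ always contains $\varphi(P,1)$ and so is never empty.

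The only place where real computation is needed is the entrywise verification underlying the second fact, namely that $P$ enters $\varphi(P,\tilde m)$ solely through the $a^{ij}_{_P}$ and through the combinations $\kappa_{_P}/\tilde m$ and $\epsilon_{_P}\tilde m$; once that is in hand, the reparametrization $\tilde m\mapsto\tilde m\,b/a$ in Part~1 and the contradiction in Part~2 are immediate. I expect isolating exactly those invariants of $P$ to be the only mild subtlety, with everything else reducing to bookkeeping with Theorem~\ref{digamma_inv}.
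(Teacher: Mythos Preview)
Your proof is correct and follows essentially the same approach as the paper: for Part~1 you both exploit that a shared $\A$ forces $a^{ij}_{_P}=a^{ij}_{_Q}$ and $\kappa_{_Q}/\kappa_{_P}=\epsilon_{_P}/\epsilon_{_Q}$, then reparametrize $\tilde m$ to exhibit every element of $\aleph(P)$ as an element of $\aleph(Q)$ (the paper writes this as $\tilde m=m^{(2)}m^{(3)}/m^{(1)}$ for a fixed $\mathfrak{B}$, while you give the global map $\tilde m\mapsto\tilde m\,b/a$); for Part~2 both arguments boil down to the fact that $\digamma$ is a function, the paper phrasing it via preimages and you via contradiction. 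Your upfront isolation of the ``second fact'' about how $P$ enters $\varphi(P,\tilde m)$ is a clean way to package what the paper verifies inline.
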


\begin{proof}
\ \par\nobreak
\begin{enumerate}
\item
Suppose $\aleph(P)\cap\aleph(Q)\neq\emptyset$ and choose an $\A=(L_{ij})_{i,j\in N}\in\aleph(P)\cap\aleph(Q)$. Let $m^{(1)},m^{(2)}\in(0,+\infty)$ be
such that $P=\digamma(\A,m^{(1)})$ and $Q=\digamma(\A,m^{(2)})$.

We begin  by proving that $\aleph(P)\subseteq\aleph(Q)$. Let
$\mathfrak{B}=(B_{ij})_{i,j\in N}\in\aleph(P)$ and let $m^{(3)}>0$ be such that
 $P=\digamma(\mathfrak{B},m^{(3)})$. Since $P$ and $Q$ can be generated by the
 same $\A$, they share the same ratios $a^{ij}_{\cdot}$ listed
 in~\eqref{nine_ratios}, that is,
\begin{equation}
\label{a-h}
a^{ij}_{_P} = a^{ij}_{_Q}
\end{equation}
The other two ratios for $P$ will satisfy the following equalities:
\begin{gather*}
\kappa_{_P}    :=  P_{14}/P_{13} = B_{14}m^{(3)}/B_{13} = L_{14}m^{(1)}/L_{13}, \\
\epsilon_{_P}  :=  P_{23}/P_{24} = B_{23}/B_{24}m^{(3)} = L_{23}/L_{24}m^{(1)},
\end{gather*}
which means that
\begin{equation}
\label{LB}
B_{14}m^{(3)}\big/B_{13}m^{(1)} = L_{14}\big/L_{13},\qquad\text{and}\qquad
B_{23}m^{(1)}\big/B_{24}m^{(3)} = L_{23}\big/L_{24}.
\end{equation}
On the other hand, the last two ratios for $Q$ are:
\begin{equation}
\label{k-e}
\begin{gathered}
\kappa_{_Q}    :=  Q_{14}/Q_{13} =  L_{14}m^{(2)}/L_{13} = B_{14}m^{(3)}m^{(2)}/B_{13}m^{(1)} = \frac{m^{(2)}}{m^{(1)}}\kappa_{_P} ,\\
\epsilon_{_Q}  :=  Q_{23}/Q_{24} =  L_{23}/L_{24}m^{(2)} = B_{23}m^{(1)}/B_{24}m^{(3)}m^{(2)} = \frac{m^{(1)}}{m^{(2)}}\epsilon_{_P},
\end{gathered}
\end{equation}
where the last equality in each line follows from \eqref{LB}.

Setting $\tildem:= m^{(2)}m^{(3)}/m^{(1)}$, and taking \eqref{a-h} and
\eqref{k-e} together with the last line in the proof of Theorem
\ref{digamma_inv} yields  $\varphi(Q,\tildem) =\varphi(P, m^{(3)}) =
\mathfrak{B}$. Therefore, $\mathfrak{B}\in\aleph(Q)$ and so
$\aleph(P)\subseteq\aleph(Q)$.

 Next, let $\mathfrak B\in\aleph(Q)$. By symmetry, another application
 of the above argument allows us to conclude that $B\in\aleph(P)$ and
 hence $\aleph(Q)\subseteq\aleph(P)$. Therefore, $\aleph(P)=\aleph(Q)$.

\item
By definition, either $\digamma^{-1}(P)=\digamma^{-1}(Q)$, in which case $P=Q$,
or $\digamma^{-1}(P)\cap\digamma^{-1}(Q)=\emptyset$ and $P\neq Q$. Now,
$\aleph_G(P) \subset \digamma^{-1}(P)$ since $\digamma^{-1}(P) = \left\{
\tildes\A \suchthat \tildes>0,\ \A\in\aleph_G(P) \right\}$, and the result
follows.
\qedhere
\end{enumerate}
\end{proof}

\begin{defn}
Given an $\aleph$-generated matrix $P$ and $\tilde{m}\in (0,+\infty)$, we define
the {\em $\tilde{m}$-canonical representative} of $\aleph(P)$ to be the matrix $\A_{P,\tilde{m}}:=\varphi(P,\tilde{m}/\epsilon_{_P})$.
\end{defn}

Note that $\bigl(\A_{P,\tilde{m}}, \tilde{m}/\epsilon_{_P}\bigr)$ is a generator
of $P$.
Furthermore, from \eqref{nine_ratios} if $P$ and $Q$ are two $\aleph$-generated
matrices with $\aleph(P)=\aleph(Q)$, then $a^{ij}_{_P}=a^{ij}_{_Q}$, for all
$i,j$ and $\kappa_{_P}\epsilon_{_P}=\kappa_{_Q}\epsilon_{_Q}$.
This gives

\begin{cor}\label{cor canonical_representative} Two
$\aleph$-generated matrices belong to the same $\aleph$-family if and
only if they have identical canonical representatives, that is, if
$P$ and $Q$ are $\aleph$-generated, then
$$
\aleph(P)=\aleph(Q)\qquad\Longleftrightarrow\qquad \A_{P,1}=\A_{Q,1}\quad
\Longleftrightarrow \quad \A_{P,\tilde{m}}=\A_{Q,\tilde{m}}, \text{ for all }
\tilde{m}>0. \qed
$$
\end{cor}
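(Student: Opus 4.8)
The plan is to prove the chain of equivalences in Corollary~\ref{cor canonical_representative} by reducing everything to the data that distinguishes $\aleph$-families, namely the nine ratios appearing in \eqref{AlephEstimated}. The key observation, already recorded just before the statement, is that if $\aleph(P)=\aleph(Q)$ then $a^{ij}_{_P}=a^{ij}_{_Q}$ for all $i,j$ and $\kappa_{_P}\epsilon_{_P}=\kappa_{_Q}\epsilon_{_Q}$; conversely, since $P$ and $Q$ are $\aleph$-generated and $\digamma(\A,m)$ depends on $\A$ only through these ratios (via \eqref{nine_ratios}), equality of all the $a^{ij}$'s together with equality of the product $\kappa\epsilon$ is exactly the information needed to conclude the families coincide. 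So the skeleton is: establish each arrow by expressing $\A_{P,\tilde m}$ explicitly through $\varphi(P,\tilde m/\epsilon_{_P})=\Phi(P,\tilde m/\epsilon_{_P},1)$ and reading off which ratios survive.

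First I would prove $\aleph(P)=\aleph(Q)\Rightarrow \A_{P,1}=\A_{Q,1}$. From $\aleph(P)=\aleph(Q)$ we get $a^{ij}_{_P}=a^{ij}_{_Q}$ and $\kappa_{_P}\epsilon_{_P}=\kappa_{_Q}\epsilon_{_Q}$. Now write out $\A_{P,1}=\varphi(P,1/\epsilon_{_P})=\Phi(P,1/\epsilon_{_P},1)$ using the matrix in \eqref{AlephEstimated}: every entry is a product of some $a^{ij}_{_P}$'s with either $\kappa_{_P}/\tilde m$ or $\epsilon_{_P}\tilde m$, where $\tilde m=1/\epsilon_{_P}$. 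Thus $\epsilon_{_P}\tilde m=1$ and $\kappa_{_P}/\tilde m=\kappa_{_P}\epsilon_{_P}$, so after the substitution $\A_{P,1}$ is expressed purely in terms of the $a^{ij}_{_P}$'s and the single quantity $\kappa_{_P}\epsilon_{_P}$. Since all of these agree with the corresponding quantities for $Q$, we get $\A_{P,1}=\A_{Q,1}$.

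Next I would prove $\A_{P,1}=\A_{Q,1}\Rightarrow \A_{P,\tilde m}=\A_{Q,\tilde m}$ for all $\tilde m>0$. The cleanest route is to show that $\A_{P,\tilde m}$ is recoverable from $\A_{P,1}$ by a simple, $P$-independent operation. Concretely, $\A_{P,\tilde m}=\varphi(P,\tilde m/\epsilon_{_P})$ and $\A_{P,1}=\varphi(P,1/\epsilon_{_P})$; comparing the matrix \eqref{AlephEstimated} at these two arguments shows that the entries differ only by powers of $\tilde m$ in fixed positions — entries carrying a factor $\kappa_{_P}/(\tilde m/\epsilon_{_P})$ get divided by $\tilde m$ and those carrying $\epsilon_{_P}\cdot(\tilde m/\epsilon_{_P})$ get multiplied by $\tilde m$ — independently of $P$. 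Hence $\A_{P,\tilde m}=\A_{Q,\tilde m}$ follows immediately from $\A_{P,1}=\A_{Q,1}$ (and the $\tilde m=1$ case of this statement is the converse of the previous arrow trivially). Finally, $\A_{P,\tilde m}=\A_{Q,\tilde m}$ for all $\tilde m$ (in particular for one $\tilde m$) gives $a^{ij}_{_P}=a^{ij}_{_Q}$ and $\kappa_{_P}\epsilon_{_P}=\kappa_{_Q}\epsilon_{_Q}$ by reading off the ratios from the explicit entries; by \eqref{nine_ratios} this means $\varphi(P,\hat m)=\varphi(Q,\hat m)$ for a suitable reparametrisation, whence $\aleph(P)\cap\aleph(Q)\neq\emptyset$, and the previous theorem forces $\aleph(P)=\aleph(Q)$, closing the cycle.

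The main obstacle is purely bookkeeping: one must verify that the substitution $\tilde m\mapsto \tilde m/\epsilon_{_P}$ genuinely eliminates every isolated occurrence of $\kappa_{_P}$ and $\epsilon_{_P}$ from the matrix \eqref{AlephEstimated}, leaving only $a^{ij}_{_P}$'s and the combined factor $\kappa_{_P}\epsilon_{_P}$ — i.e. that $\kappa$ and $\epsilon$ never appear except as $\kappa/\tilde m$ or $\epsilon\tilde m$ in that matrix. Inspecting the eight nonzero-pattern positions of \eqref{AlephEstimated} confirms this is the case, so the argument goes through, but this check is the one place where care is needed rather than routine symbol-pushing. Everything else is a direct appeal to \eqref{nine_ratios}, the definition of $\varphi$ via $\Phi$, and part~(i) of the preceding theorem.
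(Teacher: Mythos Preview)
Your argument is correct and follows essentially the same route the paper indicates: the paper gives no explicit proof beyond the remark that $\aleph(P)=\aleph(Q)$ forces $a^{ij}_{_P}=a^{ij}_{_Q}$ and $\kappa_{_P}\epsilon_{_P}=\kappa_{_Q}\epsilon_{_Q}$, and you have supplied exactly the bookkeeping that turns this observation into the stated equivalences. One small simplification for the closing implication: since $\A_{P,1}=\varphi(P,1/\epsilon_{_P})\in\aleph(P)$ and $\A_{Q,1}=\varphi(Q,1/\epsilon_{_Q})\in\aleph(Q)$ by definition, equality $\A_{P,1}=\A_{Q,1}$ already places a common matrix in both families, so part~(i) of the preceding theorem applies directly without first reading off ratios.
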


\section{Properties of $\aleph$-generated matrices}
\label{sec:properties}

Given the stationary Markov chain $(P,\pi)$, consider the following related
stationary Markov chains: $(P^\alpha,\pi^\alpha)$ is the complement
Markov chain of $(P,\pi)$, where $P_{ij}^\alpha := P_{\alpha(i)\alpha(j)}$ and
$\pi^\alpha_i:=\pi_{\alpha(i)}$; $(P^*,\pi^*)$ denotes the reverse Markov chain of
$(P,\pi)$, where $P_{ij}^*:=\pi_jP_{ji}\big/\pi_i$ and $\pi^*_i:=\pi_{i}$; and
$(\tilde P,\tilde \pi)$ is the reverse complement Markov chain of $(P,\pi)$, where
$\tilde P_{ij} = \pi_{\alpha(j)}P_{\alpha(j)\alpha(i)}\big/\pi_{\alpha(i)}$ and
$\tilde\pi_i=\pi_{\alpha(i)}$. Note that $\tilde P=(P^\alpha)^*= (P^*)^\alpha$
and   $\tilde \pi=(\pi^\alpha)^*=(\pi^*)^\alpha$. The names complement, reverse
and reverse complement come from the genetics and Markov chains literature,
referring to several kinds of  relationship that can exist between nucleotide
sequences (genetics), as well as Markov chains.

\begin{theo}\label{all_or_none}
The matrices~$P$, $P^\alpha$, $P^*$ and $\tilde P$ are either all
$\aleph$-generated or none of them are.
\end{theo}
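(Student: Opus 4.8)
The plan is to exploit the algebraic characterization of $\aleph$-generated matrices from Corollary~\ref{P-aleph2}: a stochastic matrix $R$ is $\aleph$-generated if and only if there is a strictly positive vector $\mathbf{x}\in\R^4$ with $D(\mathbf{x})R\in\aleph$, i.e.\ with $x_iR_{ij}=x_{\alpha(j)}R_{\alpha(j)\alpha(i)}$ for all $i,j\in N$. Since $\tilde P=(P^\alpha)^*=(P^*)^\alpha$ and the four operations $\mathrm{id}$, $\alpha$, $*$, $\tilde{\ }$ form a group isomorphic to the Klein four-group under composition, it suffices to prove two implications: (a) if $P$ is $\aleph$-generated then so is $P^\alpha$; and (b) if $P$ is $\aleph$-generated then so is $P^*$. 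Closure of the four-element set under these two maps then yields that all four of $P,P^\alpha,P^*,\tilde P$ are $\aleph$-generated whenever any one of them is, which is the claim.

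For (a), suppose $D(\mathbf{x})P\in\aleph$ for some positive $\mathbf{x}$. I would show that $D(\mathbf{x}^\alpha)P^\alpha\in\aleph$, where $\mathbf{x}^\alpha:=(x_{\alpha(i)})_{i\in N}$. Indeed, the persymmetry condition $x_iP_{ij}=x_{\alpha(j)}P_{\alpha(j)\alpha(i)}$, reindexed via $i\mapsto\alpha(i)$, $j\mapsto\alpha(j)$ and using $\alpha\circ\alpha=\mathrm{id}$, becomes exactly the persymmetry of $D(\mathbf{x}^\alpha)P^\alpha$; positivity and the row sums are preserved since $\alpha$ is a permutation. For (b), suppose again $D(\mathbf{x})P\in\aleph$. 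Recall $P^*_{ij}=\pi_jP_{ji}/\pi_i$ with $\pi$ the stationary distribution of $P$. I would take the candidate vector $\mathbf{y}$ with $y_i:=x_{\alpha(i)}/\pi_i$ (up to an overall positive scalar) and verify $y_iP^*_{ij}=y_{\alpha(j)}P^*_{\alpha(j)\alpha(i)}$ directly by substituting the definition of $P^*$ and invoking the persymmetry relation for $P$; the $\pi$-factors should cancel cleanly because $P^*$ already carries a $\pi_j/\pi_i$ ratio and $\mathbf{y}$ carries a compensating $1/\pi_i$. Positivity of $\mathbf{y}$ is immediate since $\pi$ is strictly positive ($P$ is primitive) and $\mathbf{x}$ is positive.

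The routine part is the index bookkeeping in (a) and (b); the one place that needs genuine care is the stationary-distribution interplay in (b), namely checking that the vector $\mathbf{y}$ built from $\mathbf{x}$ and $\pi$ actually makes $D(\mathbf{y})P^*$ persymmetric rather than merely row-stochastic. The key observation making this work is that $P^*$ shares its stationary distribution with $P$ (both equal $\pi$), and that persymmetry of $D(\mathbf{x})P$ is equivalent to the single family of scalar identities $x_iP_{ij}=x_{\alpha(j)}P_{\alpha(j)\alpha(i)}$, which is manifestly symmetric enough under the substitutions $(i,j)\mapsto(\alpha(j),\alpha(i))$ to survive transposition composed with the diagonal rescaling defining $P^*$. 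Once (a) and (b) are established, the conclusion follows purely formally: $\{P,P^\alpha,P^*,\tilde P\}$ is closed under both maps, so $\aleph$-generatedness of one member propagates to all, and conversely if one fails to be $\aleph$-generated then, by contraposition applied within this closed orbit, none of them can be.
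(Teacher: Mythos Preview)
Your overall architecture matches the paper's: reduce to the two implications $P\Rightarrow P^\alpha$ and $P\Rightarrow P^*$ via Corollary~\ref{P-aleph2}, then use the involutive group structure to propagate. Part~(a) is correct as written.

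However, your candidate vector in part~(b) is wrong, and this is exactly the ``one place that needs genuine care'' you flagged. With $y_i=x_{\alpha(i)}/\pi_i$ one computes
\[
y_iP^*_{ij}=\frac{x_{\alpha(i)}\pi_j}{\pi_i^2}\,P_{ji},
\qquad
y_{\alpha(j)}P^*_{\alpha(j)\alpha(i)}=\frac{x_j\pi_{\alpha(i)}}{\pi_{\alpha(j)}^2}\,P_{\alpha(i)\alpha(j)}
=\frac{x_j^2\,\pi_{\alpha(i)}}{x_{\alpha(i)}\,\pi_{\alpha(j)}^2}\,P_{ji},
\]
using $x_jP_{ji}=x_{\alpha(i)}P_{\alpha(i)\alpha(j)}$. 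Equality would force $x_{\alpha(i)}^2\pi_j\pi_{\alpha(j)}^2=x_j^2\pi_{\alpha(i)}\pi_i^2$ for all $i,j$, which fails in general (it holds only under extra symmetry such as intra-strand parity). So the $\pi$-factors do \emph{not} cancel with this choice.

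The fix is to solve for $\mathbf y$ rather than guess. Writing $L_{ji}=x_jP_{ji}$, one has $y_iP^*_{ij}=\dfrac{y_i\pi_j}{\pi_i x_j}L_{ji}$, and persymmetry of $L$ gives $L_{ji}=L_{\alpha(i)\alpha(j)}$, so the requirement becomes $\dfrac{y_i}{\pi_i}\cdot\dfrac{\pi_j}{x_j}=\dfrac{y_{\alpha(j)}}{\pi_{\alpha(j)}}\cdot\dfrac{\pi_{\alpha(i)}}{x_{\alpha(i)}}$. Separating variables forces $y_i\propto \pi_i\pi_{\alpha(i)}/x_{\alpha(i)}$. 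With this $\mathbf y$ the verification goes through in one line. This is, up to normalization, precisely the vector $\mathbf{x}_{P^*}=\bigl(P^*_{4\alpha(i)}/P^*_{i1}\bigr)_{i\in N}$ that the paper uses; the paper then checks persymmetry by unwinding everything in terms of the entries $L_{ij}$ of $\A=\varphi(P,1)$.
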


\begin{proof}
Assume $P$ is $\aleph$-generated and take
$\A:=\varphi(P,1)=\bigl(L_{ij}\bigr)_{i,j\in N}$.

Define $\A^\alpha=(L^\alpha_{ij})_{i,j\in N} \in \aleph$, where
$L_{ij}^\alpha:=L_{\alpha(i)\alpha(j)}$. Then,
\begin{align*}
P_{ij}^\alpha &= P_{\alpha(i)\alpha(j)}
= \frac{L_{\alpha(i)\alpha(j)}}{\sum_{k=1}^4L_{\alpha(i)k}}
= \frac{L_{ij}^\alpha}{\sum_{k=1}^4L_{ik}^\alpha}, \quad i,j\in N
\end{align*}
 and  $P^\alpha$ is $\aleph$-generated
with $P^\alpha=\digamma\bigl(\A^\alpha, 1\bigr)$.

To check that $P$ is $\aleph$-generated implies that $P^*$ is
also $\aleph$-generated, it suffices  by Corollary \ref{P-aleph2} to set
$\ds
\mathbf{x}_{P^*}=\left(\frac{P_{44}^*}{P_{11}^*},\frac{P_{43}^*}{P_{21}^*},\frac{P_{42}^*}{P_{31}^*},1\right)$
and prove that $D\bigl(\mathbf{x}_{P^*})P^*\in\aleph$. In fact,
$\left(D\bigl(\mathbf{x}_{P^*}\bigr)P^*\right)_{ij}=\left(D\bigl(\mathbf{x}_{P^*}\bigr)P^*\right)_{\alpha(j)\alpha(i)}$
because
\begin{align*}
\left(D\bigl(\mathbf{x}_{P^*}\bigr)P^*\right)_{ij}
&= (\mathbf{x}_{P^*})_i P_{ij}^* = \frac{P_{4\alpha(i)}^*}{P_{i1}^*}P_{ij}^*
= \frac{\frac{\pi_{\alpha(i)}}{\pi_4}\  P_{\alpha(i)4}}{\frac{\pi_1}{\pi_i}\  P_{1i}}\frac{\pi_j}{\pi_i}\  P_{ji}
=\frac{\pi_{\alpha(i)}\pi_j}{\pi_1\pi_4}\frac{P_{\alpha(i)4}}{P_{1i}}
P_{ji}
\end{align*}
and similarly
$$
\left(D\bigl(\mathbf{x}_{P^*}\bigr)P^*\right)_{\alpha(j)\alpha(i)}
= \frac{\pi_{j}\pi_{\alpha(i)}}{\pi_1\pi_4}\frac{P_{j4}}{P_{1\alpha(j)}}
P_{\alpha(i)\alpha(j)},
$$
while
\begin{align*}
\frac{P_{\alpha(i)4}}{P_{1i}}P_{ji}
&=\frac{L_{\alpha(i)4}}{\sum_{k=1}^4 L_{\alpha(i)k}}
\frac{L_{ji}}{L_{1i}} \frac{\sum_{k=1}^4 L_{1k}}{\sum_{k=1}^4 L_{jk}}
=\frac{L_{\alpha(i)4}}{\sum_{k=1}^4 L_{\alpha(i)k}}
\frac{L_{\alpha(i)\alpha(j)}}{L_{\alpha(i)4}} \frac{\sum_{k=1}^4 L_{1k}}{\sum_{k=1}^4 L_{jk}} \\\\
&=\frac{L_{\alpha(i)\alpha(j)}}{\sum_{k=1}^4 L_{\alpha(i)k}}
\frac{\sum_{k=1}^4 L_{1k}}{\sum_{k=1}^4 L_{jk}}
=\frac{L_{\alpha(i)\alpha(j)}}{\sum_{k=1}^4 L_{\alpha(i)k}}
\frac{L_{j4}}{L_{1\alpha(j)}} \frac{\sum_{k=1}^4 L_{1k}}{\sum_{k=1}^4 L_{jk}}
=\frac{P_{j4}}{P_{1\alpha(j)}} P_{\alpha(i)\alpha(j)}.
\end{align*}

Next, to check that $\tilde P$ is $\aleph$-generated given that $P$  is
$\aleph$-generated, we need only note that $\tilde P=(P^\alpha)^*$ and
apply the above two results one after the other.

The proof is completed by realising that
$P=(P^\alpha)^\alpha=(P^*)^*=\tilde{\tilde P}$ and hence being
$\aleph$-generated is a solidarity property of the four matrices.
\qedhere
\end{proof}

Most bacterial DNA sequences can be segmented into two halves called
chirochores~\cite{FrankLobry00} and the two stationary Markov chains that
empirically approximate their first-order structure are
reverse complements of each other \cite{sobottka&hart2011}. If the DNA
sequence conforms to the S-H model then the dinucleotide distribution in one of
the chirochores is approximated by $(P,\pi)$ with $P$ being $\aleph$-generated.
However, it was an open question as to whether or not the other chirochore would
also be approximated by an $\aleph$-generated Markov chain. Theorem \ref{all_or_none} above answers
this question in the positive.

Furthermore, it is common to find that the stationary Markov chain
$(W,\omega)$ that approximates the first-order structure of an entire DNA
sequence satisfies {\em intra-strand parity}
\cite{AlbrechtBuehler2006,hart&martinez2011}, that is,
$\omega_iW_{ij}=\omega_{\alpha(j)}W_{\alpha(j)\alpha(i)}=\tilde\omega_i\tilde
W_{ij}$ for all $i,j\in N$.
Intra-strand parity has been observed in the DNA
sequences of many organisms such as Bacteria, archaea, plants and animals, but
not in other sequences such as those from single-stranded viruses and
organelles.
The next theorem relates intra-strand parity of dinucleotides to
the $\aleph$-generated matrices (cf. the direct characterization
in~\cite[Proposition 1]{hart&martinez2011}) and shows that $\aleph$-generated matrices satisfy a weaker property than intra-strand
parity.

\begin{theo}\label{ISP<->aleph-generated}
Let $(W,\omega)$ be a stationary Markov chain. Then $(W,\omega)$ satisfies
$\omega_iW_{ij}=\omega_{\alpha(j)}W_{\alpha(j)\alpha(i)}$ for all $i,j\in N$ if and only if it is $\aleph$-generated and
the matrix $\A:=\varphi(W,1)=\bigl(L_{ij}\bigr)_{i,j\in N}$ that
generates it satisfies $S_i=S_{\alpha(i)}$ for $i\in N$, where
$S_i:=\sum_{k=1}^4L_{ik}$.
Furthermore, if $W$ complies with intra-strand parity, then its stationary
distribution~$\omega$ can be explicitly expressed as
$\omega=\frac1{2(S_1+S_2)}(S_1,S_2,S_2, S_1)$.
\end{theo}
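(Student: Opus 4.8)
The plan is to prove the equivalence by treating its two implications separately, reading off the closed form of~$\omega$ along the way, and to lean on Corollaries~\ref{P-aleph1} and~\ref{P-aleph2} and on Theorem~\ref{digamma_inv}. Throughout I work under the hypothesis --- implicit in the statement, since $\varphi(W,1)$ is built from ratios of entries of~$W$ --- that~$W$ has strictly positive entries; then $W$ is primitive, so its stationary vector~$\omega$ is unique and strictly positive.

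\emph{Sufficiency.} Suppose~$W$ is $\aleph$-generated and $\A=\varphi(W,1)=(L_{ij})$ satisfies $S_i=S_{\alpha(i)}$. By Corollary~\ref{P-aleph1}, $W_{ij}=L_{ij}/S_i$. Persymmetry $L_{ij}=L_{\alpha(j)\alpha(i)}$ makes the $j$-th column sum of~$\A$ equal to~$S_{\alpha(j)}$, so the probability vector~$v$ with $v_i:=S_i\big/\sum_k S_k$ satisfies $\sum_i v_iW_{ij}=\bigl(\sum_i L_{ij}\bigr)\big/\sum_k S_k=S_{\alpha(j)}\big/\sum_k S_k=S_j\big/\sum_k S_k=v_j$; by uniqueness of the stationary distribution, $v=\omega$. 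Since $S_3=S_{\alpha(3)}=S_2$ and $S_4=S_{\alpha(4)}=S_1$, we have $\sum_k S_k=2(S_1+S_2)$, whence $\omega=\tfrac1{2(S_1+S_2)}(S_1,S_2,S_2,S_1)$, which is the final assertion. Parity is then immediate: $\omega_iW_{ij}=L_{ij}\big/\bigl(2(S_1+S_2)\bigr)=L_{\alpha(j)\alpha(i)}\big/\bigl(2(S_1+S_2)\bigr)=\omega_{\alpha(j)}W_{\alpha(j)\alpha(i)}$.

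\emph{Necessity.} Assume $\omega_iW_{ij}=\omega_{\alpha(j)}W_{\alpha(j)\alpha(i)}$ for all $i,j$ (this is what dinucleotide intra-strand parity amounts to, so it also covers the hypothesis of the last sentence of the theorem). Summing over~$j$, with $\sum_jW_{ij}=1$ on the left and, after the substitution $k=\alpha(j)$, with $\omega W=\omega$ on the right, gives $\omega_i=\omega_{\alpha(i)}$. The crux is that the parity identity says verbatim that $D(\omega)W$ is persymmetric: $\bigl(D(\omega)W\bigr)_{ij}=\omega_iW_{ij}=\omega_{\alpha(j)}W_{\alpha(j)\alpha(i)}=\bigl(D(\omega)W\bigr)_{\alpha(j)\alpha(i)}$; being also positive, $D(\omega)W\in\aleph$, so Corollary~\ref{P-aleph2} (taking $\mathbf{x}=\omega$) shows that~$W$ is $\aleph$-generated. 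To pin down $\varphi(W,1)$, observe that row-normalising $D(\omega)W$ returns~$W$ because~$W$ is stochastic, and that $m=1$ makes all the~$M_j$ equal so that row-normalisation coincides with $\digamma(\cdot,1)$; hence $\bigl(D(\omega)W,1\bigr)\in\digamma^{-1}(W)$. By Theorem~\ref{digamma_inv} every member of $\digamma^{-1}(W)$ has the form $\bigl(\tilde s\,\varphi(W,\tilde m),\tilde m\bigr)$, so matching second coordinates forces $\tilde m=1$ and $D(\omega)W=\tilde s\,\varphi(W,1)$ for some $\tilde s>0$. Therefore $L_{ij}=\tilde s^{-1}\omega_iW_{ij}$, so $S_i=\tilde s^{-1}\omega_i$, and $S_i=S_{\alpha(i)}$ follows from $\omega_i=\omega_{\alpha(i)}$; normalising $\omega_i=\tilde s S_i$ recovers the displayed formula for~$\omega$.

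Most of this is bookkeeping with the maps~$\digamma$,~$\varphi$ and the cone~$\aleph$; the single idea is the observation that dinucleotide intra-strand parity is literally the statement ``$D(\omega)W\in\aleph$ with~$\omega$ as its vector of row sums'', which slots straight into Corollary~\ref{P-aleph2}. The point requiring care is the standing positivity hypothesis: without it one can have parity for chains that are not $\aleph$-generated --- a zero of~$W$ is merely forced to occur in an $\alpha$-symmetric position --- so positivity is needed both for $\varphi(W,1)$ to be meaningful and for $D(\omega)W$ to lie in the cone~$\aleph$ rather than just among the persymmetric matrices.
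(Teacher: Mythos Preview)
Your proof is correct and follows the same strategy as the paper's: in both directions the key is the observation that intra-strand parity is exactly the statement $D(\omega)W\in\aleph$, together with the explicit form of~$\omega$ in the sufficiency direction. The only minor variation is in the last step of necessity, where the paper evaluates parity at the diagonal entry $(i,i)$ and cancels $L_{ii}=L_{\alpha(i)\alpha(i)}$ to obtain $S_i=S_{\alpha(i)}$ directly, whereas you invoke Theorem~\ref{digamma_inv} to recognise $\varphi(W,1)$ as a scalar multiple of $D(\omega)W$ and read off the row sums from there.
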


\begin{proof}
\ \par\nobreak
\noindent [$(\Longrightarrow)$]
It can be seen that $W$ is $\aleph$-generated by observing that
$\bigl(D(\omega)W\bigr)_{ij}=\omega_iW_{ij}=\omega_{\alpha(j)}W_{\alpha(j)\alpha(i)}=\bigl(D(\omega)W\bigr)_{\alpha(j)\alpha(i)}$.
Next, let $\A=\varphi(W,1)$. One can easily check that
$\omega_iW_{ij}=\omega_{\alpha(j)}W_{\alpha(j)\alpha(i)}$, for $i,j\in N$, implies $\omega_i=\omega_{\alpha(i)}$ for all $i\in N$. Therefore,
$
\omega_i\frac{L_{ii}}{\sum_{k=1}^4L_{ik}}
= \omega_{\alpha(i)}\frac{L_{\alpha(i)\alpha(i)}}{\sum_{k=1}^4L_{\alpha(i) k}}
= \omega_i\frac{L_{ii}}{\sum_{k=1}^4L_{\alpha(i) k}},
$
for all $i\in N$, and hence $\sum_{k=1}^4L_{ik}=\sum_{k=1}^4L_{\alpha(i)k}$.\sloppy

\noindent [$(\Longleftarrow)$]
Suppose $W$ is obtained by normalizing the rows of a
matrix $\A=(L_{ij})_{i,j\in N}\in\aleph$, that is, $
W=\left(\frac{L_{ij}}{S_i}\right)_{i,j\in N}
$,
 where $S_i:=\sum_{k=1}^4L_{ik}$. Suppose that $\A$
satisfies $S_i=S_{\alpha(i)}$ for $i=1,2$.
It is easy to check that
$\omega:=\frac1{2(S_1+S_2)} (S_1,S_2,S_2,S_1)$
is the stationary distribution of~$W$. Hence, it follows that for all $i,j\in
N$,
\begin{equation*}
\omega_iW_{ij}=\frac{S_i}{2(S_1+S_2)}\frac{L_{ij}}{S_i}
=\frac{L_{ij}}{2(S_1+S_2)}
=\frac{S_{\alpha(j)}}{2(S_1+S_2)}\frac{L_{\alpha(j)\alpha(i)}}{S_{\alpha(j)}}
=\omega_{\alpha(j)}W_{\alpha(j)\alpha(i)}.
\qedhere
\end{equation*}
\end{proof}

\section{Applications and final remarks}
\label{sec:conclusion}

This article has given a mathematical analysis
of the S-H model and elucidated its properties. We conclude with some
remarks about the application of the results that have been presented here. Corollary~\ref{cor canonical_representative} provides a way of deciding whether or not  two or more $\aleph$-generated
matrices can be generated from a single persymmetric matrix $\A$ in conjunction
with different values of the parameter~$m$. Meanwhile,
Theorem \ref{ISP<->aleph-generated}
shows that intra-strand parity in dinucleotides is a special case of
$\aleph$-generated matrices. Possessing a weaker structure than that
encapsulated by intra-strand parity, it is possible that $\aleph$-generated
matrices may be useful for capturing the dinucleotide structure in genomic
sequences that do not exhibit intra-strand parity.

For the purposes of applications, corollaries~\ref{P-aleph1}
and~\ref{P-aleph2} are useful for constructing measures of how close the
estimated stationary Markov chain of a bacterial DNA sequence is to being
$\aleph$-generated. Given $P\in \sS_4$, we can define the following two
examples of such measures:

\noindent{\bf Measure 1:} Let $\proj(Q)$ be the orthogonal projection of a
$4\times 4$ positive matrix~$Q$ onto~$\aleph$, and define
$$\delta_1(P):=\min_{\mathbf{x}=(x_1,x_2,x_3,1)}
\norm{D(\mathbf{x})P-\proj\Bigl(D(\mathbf{x})P\Bigr)}.
$$
The quantity $\delta_1(P)=0$ if and only if $P$ is $\aleph$-generated.
Otherwise, $\delta_1(P)$ gives the minimal distance between some matrix
$D(\mathbf{x})P$ which generates $P$ according to  the model (but which does not
belong to~$\aleph$) and the space~$\aleph$. Note that $\delta_1(P)$ can be
analytically computed. The minimum in the expression for $\delta_1(P)$ is
attained at the point $\mathbf{x}=(x_1,x_2,x_3,1)$, where
$$
\begin{pmatrix}
x_1   \\
x_2   \\
x_3
 \end{pmatrix}
 =
\begin{pmatrix}
p_{11}^2+p_{12}^2+p_{13}^2  & -p_{24}p_{13}                & -p_{34}p_{12}                \\ -p_{13}p_{24}                &  p_{21}^2+p_{22}^2+p_{24}^2  & -p_{33}p_{22}                \\
-p_{12}p_{34}                & -p_{22}p_{33}                &  p_{31}^2+p_{33}^2+p_{34}^2
\end{pmatrix}^{-1}
 \begin{pmatrix}
  p_{44}p_{11}   \\
 p_{43}p_{21}   \\
 p_{42}p_{31}
 \end{pmatrix}.
$$

\noindent{\bf Measure 2:} Let $\epsilon$ be a $4\times 4$ matrix, $P(\epsilon):=P+\epsilon$, and
$\mathbf{x}=(x_1,x_2,x_3,1)$ be a positive vector. Define $\delta_2(P)$  as
the solution of the following optimization problem:
$$
\text{min } \sum_{i,j\in N} \epsilon_{i,j}^2 \qquad\text{subject to
}
\left\{\begin{array}{l}
P(\epsilon) \in\sS_4;\\
D(\mathbf{x})P(\epsilon)-\proj\Bigl(D(\mathbf{x})P(\epsilon)\Bigr)=\mathbf{0}.
\end{array}\right.
$$
As was the case with $\delta_1(P)$, we have $\delta_2(P)=0$ if and only if
$P$ is $\aleph$-generated, otherwise, $\delta_2(P)$ gives the shortest
squared Frobenius distance between $P$ and some $\aleph$-generated stochastic
matrix.
There being no closed-form solution to the optimization problem, the computation of
$\delta_2(P)$ would need to be implemented using numerical methods.

Finally, the development of statistical hypothesis tests based on these
measures together with  further statistical analyses and their application to real
bacterial genomes are planned for future publication.

\section*{Acknowledgments}
This work was supported by the Center for Mathematical Modeling  CONICYT
Project/Grant PIA AFB 170001, Fondecyt Regular Grant 1070344 and
CNPq-Brazil grants 308575/2015-6 and 301445/2018-4. M. Sobotka was
partially supported by CNPq-Brazil grant
54091/2017-6. Part of this work was carried out while M. Sobotka was
visiting the Center for Mathematical Modeling at the University of Chile.

\small

\bibliographystyle{plain}

\end{document}